\newtheorem{lemma}{\textbf{Lemma}}
\newtheorem{definition}{\textbf{Definition}}
\newtheorem{thm}{\textbf{Theorem}}
\newtheorem{clm}{\textbf{Claim}}
\newtheorem{fact}{\textbf{Fact}}
\newtheorem{example}{\textbf{Example}}
\newtheorem{remark}{\textbf{Remark}}
\newtheorem{question}{\textbf{Question}}
\newcommand{\ie}{{i.e.}}
\newcommand{\eg}{{e.g.}}
\newcommand{\etal}{{et al.}}
\renewcommand{\S}{Section~}
\newcommand{\mc}[1]{\mathcal{#1}}
\renewcommand{\Pr}[1]{\ensuremath{\operatorname{\mathbf{Pr}}\left[#1\right]}}
\newcommand{\Ex}[1]{\ensuremath{\operatorname{\mathbf{E}}\left[#1\right]}}
\newcommand{\Var}[1]{\ensuremath{\operatorname{\mathbf{Var}}\left[#1\right]}}
\title{Storage, Communication, and Load Balancing Trade-off in Distributed Cache Networks}
\author{
Mahdi Jafari Siavoshani, \IEEEmembership{Member, IEEE,} 
Ali Pourmiri, Seyed Pooya Shariatpanahi\\%
\thanks{The authors' names appear in alphabetical order.}%
\thanks{This work has been presented in part at IPDPS 2017, \cite{JafPourShar_IPDPS17}.}
\thanks{
M. Jafari Siavoshani is with the Department of Computer Engineering, Sharif University of Technology, Tehran, Iran (email: mjafari@sharif.edu).}
\thanks{
A. Pourmiri is with the Department of Computer Engineering, University
of Isfahan, Isfahan, Iran (email: a.pourmiri@comp.ui.ac.ir).}
\thanks{
S. P. Shariatpanahi is with the School of Computer Science, Institute for
Research in Fundamental Sciences (IPM), Tehran, Iran (email: pooya@ipm.ir).}
}
\begin{document}
\maketitle

\begin{abstract}
We consider load balancing in a network of caching servers delivering contents to end users.
Randomized load balancing via the so-called \emph{power of two choices} is a well-known approach in parallel and distributed systems. 
In this framework, we investigate the tension between storage resources, communication cost, and load balancing performance.
To this end, we propose a randomized load balancing scheme which simultaneously considers cache size limitation and proximity in the server redirection process. 

In contrast to the classical power of two choices setup, since the memory limitation and the proximity constraint cause correlation in the server selection process, we may not benefit from the power of two choices.  
However, we prove that in certain regimes of problem parameters, our scheme results in the maximum load of order $\Theta(\log\log n)$ (here $n$ is the network size). This is an exponential improvement compared to the scheme which assigns each request to the nearest available replica. Interestingly, the extra communication cost incurred by our proposed scheme, compared to the nearest replica strategy, is small.
Furthermore, our extensive simulations show that the trade-off trend does not depend on the network topology and library popularity profile details.
\end{abstract}

\centerline{\textbf{Keywords}}
Randomized Algorithms, Distributed Caching Servers, Request Routing, Load Balancing,  Communication Cost, Balls-into-Bins, Content Delivery Networks.

\section{Introduction}\label{sec:Introduction}
\subsection{Problem Motivation}
Advancement of technology leads to the spread of smart multimedia-friendly communication devices to the masses which causes a rapid growth of demands for data communication \cite{Cisco_Report}. Although
Telcos have been spending hugely on telecommunication infrastructures,  they cannot keep up with this data demand explosion.
Caching predictable data in network off-peak hours, near end users, has been proposed as a promising solution to this challenge. This approach has been used extensively in content delivery networks (CDNs) such as Akamai, Azure, Amazon CloudFront, etc. \cite{Zhang2013}, \cite{NygrenSS10}, and mobile video delivery  \cite{GolrezaeiHelper}. In this approach, a \emph{cache network} is usually referred to as a set of caching servers that are connected over a network, giving content delivery service to end users.

As a schematic view of a typical cache network see Figure~\ref{fig:Setup_Cloud}. This figure shows caching servers connected through a backhaul network. These servers are responsible for delivering contents requested by customers.  Each server should assign the demand to a responding server (which could be itself) via an assignment strategy. In every cache network we have three critical parameters, namely, \emph{Storage Resource}, \emph{Communication Cost}, and \emph{Network Imbalance}. Storage, or memory characterizes what percentage of the total content library can be cached at each server. Communication cost is the amount of data transferred inside the backhaul network to satisfy content requests. Finally, network imbalance characterizes how uniformly the requested contents' loads are distributed among different responding servers. This is usually measured by comparing the load of the busiest server with the average load of all servers after request assignments. Every request assignment strategy, in fact, leads to a trade-off between these three quantities. 

\begin{figure}
\begin{center}
\includegraphics[scale=0.45]{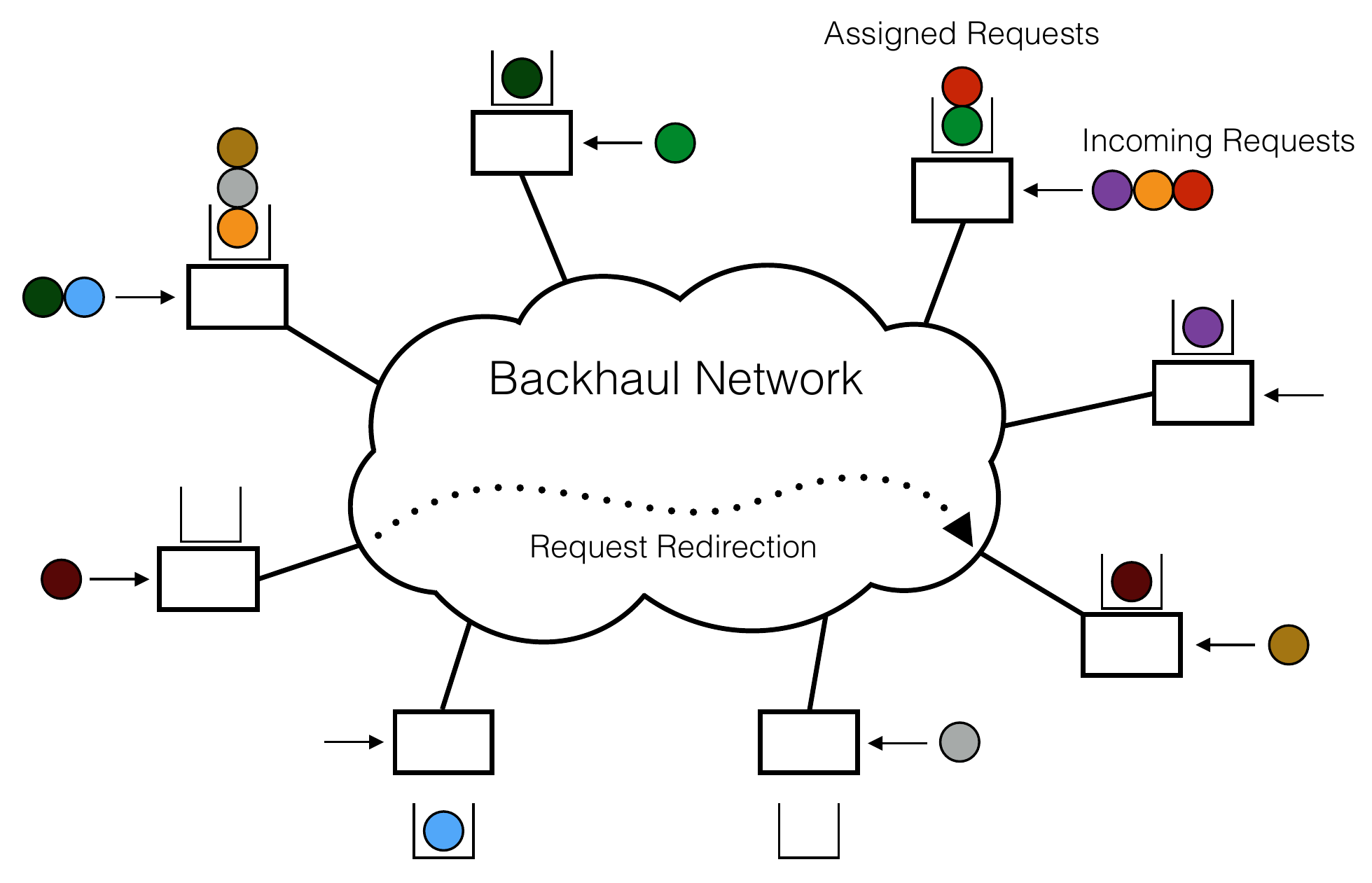}
\end{center}
\caption{A general distributed cache network.}
\label{fig:Setup_Cloud}
\end{figure}


From a practical viewpoint, we are interested in \emph{distributed} server selection strategies which are scalable in large networks. Randomized load balancing via the so-called ``power of two choices'' is a well-investigated paradigm in parallel and distributed settings \cite{ABKU99,Mitzenmacher01,AdlerCMR98,LenzenW16}. In this approach, upon arrival of a request, the corresponding user will query about the current loads of two independently at random chosen  servers, and then allocates the request to the least loaded server.
Considering only the load balancing perspective, Berenbrink \etal, \cite{BCSV06}, showed that in this scheme after allocating  $m$ balls (requests, tasks, etc.) to $n$ bins (servers, machines, etc.) the   maximum number of balls assigned to any bin, called {\it maximum load}, is at most $m/n+O(\log\log n)$ with high probability. This  only deviates $O(\log\log n)$ from the average load and the deviation  depends on the number of servers. 

Although the power of two choices strategy addresses the load balancing issue in a distributed manner, it does not consider the role of two other important quantities, namely, memory and communication cost. Our goal in this paper is to extend the power of two choices framework in order to characterize the trade-off between these three quantities. Our results show that the maximum load, communication cost, and servers' memory are three entangled parameters. Thus, the previous studies are not sufficient for designing a practical load balancing strategy in cache networks.

\subsection{Problem Setting and Our Contributions}


In this paper, we consider a general cache network model that entails basic characteristics of many practical scenarios. We consider a network of $n$ servers and a library of size $K$ files. Each server can cache $M$ files in network low-traffic hours. Let us assume a popularity distribution $\mathcal{P}=\{p_1,\dots,p_K\}$ on the library. We assume that the cache content placement at each server is proportional to this popularity distribution.
In high-traffic hours there are $n$ sequential file requests, from the library, distributed among servers uniformly at random. Every server either serves its requests or redirects them (via an assignment scheme) to other nodes which have cached the files. We define the \emph{maximum load} $L$ of an assignment scheme as the maximum number of allocations to any single server after assigning all requests. The \emph{communication cost} $C$ is the average number of hops required to deliver the requested file to its request origin. 

As the baseline assignment scheme, we consider that each request arrived at every server is dispatched to the nearest file replica. This scheme results in the minimum communication cost, while ignoring maximum load of servers. We show that, for a grid topology and every constant $0<\alpha<1/2$, if  $K=n$, $M=n^{\alpha}$, and $\mathcal{P}$ is a uniform distribution, this scheme will result in the maximum load $L$ in the interval $[\Omega(\log n/\log\log n), O(\log n)]$ with high probability\footnote{With high probability refers to an event that happens with probability $1-1/n^c$, for some constant $c>0$.} (w.h.p.).  Moreover, for
every constant $0<\epsilon<1$, if 
 $K=n^{1-\epsilon}$ and $M=\Theta(1)$, then  the maximum load is $\Theta(\log n)$ w.h.p.
We also investigate the communication cost incurred in this scheme for Uniform and Zipf popularity distributions. In particular, we derive the communication cost $C$ of order $\Theta(\sqrt{K/M})$ for the Uniform distribution in a grid topology.

In contrast, we propose a new scheme which considers memory, maximum load, and communication cost, simultaneously. For each request, this scheme chooses two random candidate servers that have cached the request while putting a constraint on their distance $r$ to the requesting node (\ie, the proximity constraint). Due to cache size limitation and the proximity constraint, current results in the balanced allocation literature cannot be carried over to our setting. Basically, we show that here the two chosen servers will become correlated and this might diminish the power of two choices. Since this correlation arises from both memory limitation and proximity constraint, the main challenge we address in this paper is characterizing the regimes where we can benefit from the power of two choices and at the same time have a low communication cost.

In particular, suppose $0<\alpha, \beta<1/2$ be two constants and
	let  $K=n$, $M=n^\alpha$,  $r=n^{\beta}$, and $\mathcal{P}$ be a Uniform distribution. Then, for grid topology, provided $\alpha+2\beta\geq 1+2(\log\log n/\log n)$, the maximum load is $\Theta(\log\log n)$ w.h.p., and the communication cost is $\Theta(r)$. Therefore, if we set $\beta={\frac{1-\alpha}{2}}+\log\log n/\log n$ we achieve the power of two choices  with the communication cost of order $\Theta(r)=\Theta\left(n^{\frac{1-\alpha}{2}} \log n\right)$. This  communication cost is only $\log n$ factor above the communication cost achieved by the nearest replica strategy, which is $\Theta(\sqrt{K/M})=\Theta({n^{\frac{1-\alpha}{2}}})$. Figure~\ref{fig:alpha_beta_tradeoff} shows the region of parameters $\alpha$ and $\beta$ where the power of two choices is asymptotically achievable. It should be noted that while our theoretical results are derived for grid networks, the main reason for assuming a grid is presentation clarity and the results can be extended to other topologies.

\begin{figure}
\begin{center}
\includegraphics[scale=0.60]{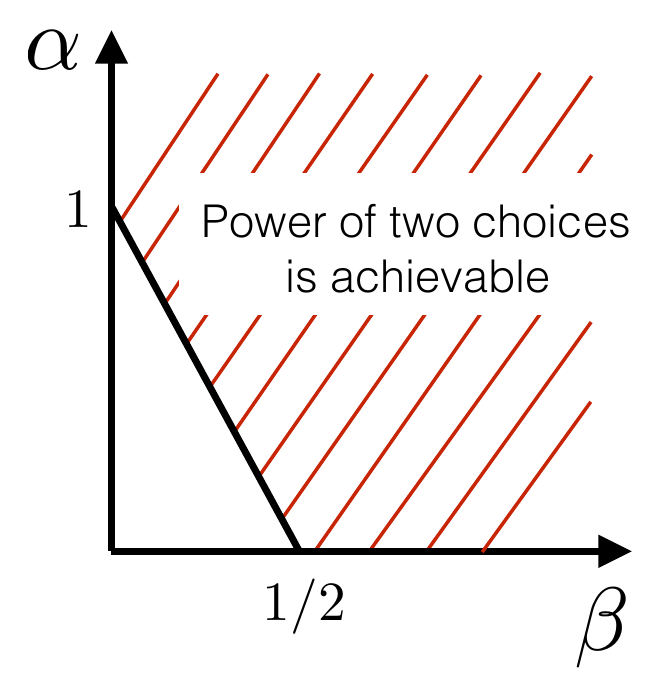}
\end{center}
\caption{Suppose $M=n^\alpha$ and $r=n^\beta$. Then the shaded area shows the region where the power of two choices is asymptotically achievable (for more details, refer to Theorem~\ref{main:twochoice}).}
\label{fig:alpha_beta_tradeoff}
\end{figure}

While our theoretical results are derived for large networks (\ie, asymptotic analysis), our simulation results show their validity even for finite sized networks. Also, in simulations we investigate the problem in more diverse settings such as considering other network topologies.

\subsection{Related Work}\label{sec:RelatedWork}
Load balancing has been the focus of many papers on cache networks \cite{CDN_Peng_2004, RoussopoulosB06, LeconteLM12}, among which distributed approaches have attracted a lot of attention (\eg, see \cite{ManfrediOR13}, \cite{AdlerCMR98}, and \cite{XiaAYL15}). Randomized load balancing via the power of two choices, is a popular approach in this direction \cite{Mitzenmacher01}. Chen et al. \cite{ChenLPCCSA05} consider the two choices selection process, where the second choice is the next neighbor of the first choice.  In \cite{XiaDH07}, Xia et al. use the length of common prefix (LCP)-based replication to arrive at a recursive balls and bins problem. In \cite{ChenLPCCSA05} and \cite{XiaDH07}, the authors benefit from the metaphor of power of two choices to design algorithms for randomized load balancing. In contrast to these works, we follow a theoretical approach to derive provable results for cache networks. 

Theoretical works investigating the power of two choices in cache networks all just consider the role of two parameters among \emph{Memory Resource},  \emph{Load Balancing}, and \emph{Communication Cost}. These works can be summarized in three categories as follow.

\emph{Memory Resource} vs. \emph{Communication Cost} trade-off has been investigated in many works such as \cite{GolrezaeiHelper}, \cite{Meyerson-2001}, \cite{Borst-2010}, \cite{Maddah-AliN14}, and \cite{Gitzenis-2013}. Non of these papers have considered the load balancing issue, in cache networks which is very important in practice.

\emph{Load Balancing} vs. \emph{Memory Resource} trade-off has been investigated in \cite{ShahV15} and \cite{LiuST16}. In \cite{ShahV15} the authors consider the supermarket model for performance evaluation of CDNs. Although the work \cite{ShahV15} considers the memory limitation into account, it does not consider the proximity principle which is a central issue in our paper. 
Liu et al. \cite{LiuST16} study the setting where the clients compare the servers in terms of hit-rate (for web applications), or bit-rate (for video applications) to choose their favourite ones. Their setup and objectives are different from those we consider here. Moreover, they have not considered the effect of their randomized load balancing scheme on communication cost.

\emph{Communication Cost} vs. \emph{Load Balancing} trade-off has been investigated in \cite{PathanVB08, TangTW14}, \cite{StanojevicS09}, \cite{BBFN12}, \cite{KP06}, \cite{God08}, and \cite{Pou16}, without considering the effect of cache size limitation. Although the works \cite{PathanVB08, TangTW14}, and \cite{StanojevicS09} have mentioned this trade-off, non of them provides a rigorous analysis. 

In contrast to the standard balls and bins model, the works \cite{BBFN12}, \cite{KP06}, \cite{God08}, and \cite{Pou16} introduced the effect of proximity constraint to the ball and bins framework. In the standard balls and bins model, each ball (request) picks two bins (servers) independently and uniformly at random and it is then allocated to the one with lesser load \cite{ABKU99}. However, in many settings, selecting any two random servers might be infeasible or costly. In other words, this proximity constraint translates to a \emph{correlation} between the two choices, \ie, the balls and bins model with \emph{related choices}.

The most related work to our paper is \cite{KP06}. Kenthapadi and Panigrahi \cite{KP06} proposed a model where $n$ bins are connected as a $d$-regular graph. Corresponding to each ball, a node is chosen uniformly at random as the first candidate. Then, one of its neighbours is chosen uniformly at random as the second candidate and the ball is allocated to the one with the minimum load. 
Under this assumption, they proved that if the graph is sufficiently dense (\ie, the average degree is $n^{\Omega(\log\log n/\log n)}$), then after allocating  $n$ balls, the maximum load is $\Theta(\log\log n)$ w.h.p.

Although the model used in \cite{KP06} considers the proximity principle by assigning each request to the origin neighbors, it cannot be directly applied to our cache network setup. First, they do not consider multi-hop communication, while in practice the communication is done in a multi-hop fashion. Second, the above model cannot accommodate the cache size limitation of servers. Cache size limitation introduces the notion of cache content placement which should be based on the popularity profile. In addition, this limitation will introduce a new source of correlation between choices which is not considered in \cite{KP06}.

The organization of the paper is as follows. In Section~\ref{sec:Notation_ProblemSetting}, we present our notation and problem setup. Then, in Section \ref{sec:Nearest_Replica_Strategy} the \emph{nearest replica strategy}, is investigated as the baseline scheme. In Section \ref{sec:Proximity-Aware_Two_Choices_Strategy}, we propose and analyze the \emph{proximity-aware two choices strategy}, which at the same time considers memory limitation, proximity constraint, and benefits from the power of two choices. In Section \ref{sec:Simulations} the performance of these two schemes are investigated via extensive simulations. Finally, our discussions and concluding remarks are presented in  \S\ref{sec:Discussion_Remarks}.

\section{Notation and Problem Setting}\label{sec:Notation_ProblemSetting}

\subsection{Notation}
Throughout the paper, with high probability refers to an event that happens with probability $1-1/n^c$, for some constant $c>0$. Let $G=(V,E)$ be a graph with vertex set $V$ and edge set $E$ where $e(G):= |E|$. For $u\in V$ let $d(u)$ denote for the degree of $u$ in $G$. For every pair of nodes $u,v\in V$,   $d_G(u, v)$ denotes the length of a shortest path from $u$ to $v$ in $G$. The neighborhood of $u$ at distance $r$ is defined as    
\[
  B_r(u) := \left\{v : d_G(u, v)\le r ~ {\text {and}}~~ v\in V(G) \right\}.
\]
Finally, we use $\mathrm{Po}(\lambda)$ to denote for the Poisson distribution with parameter $\lambda$.

\subsection{Problem Setting}

We consider a cache network consisting of $n$ caching servers (also called cache-enabled nodes) and edges connecting neighboring servers forming a $\sqrt{n} \times \sqrt{n}$ grid. Direct communication is possible only between adjacent nodes, and other communications should be carried out in a multi-hop fashion.

\begin{remark}
	Throughout the paper for the sake of presentation clarity we may consider a torus with $n$ nodes. This helps to avoid boundary effects of grid 
	and all the asymptotic results hold for the grid as well.
\end{remark}

Suppose that the cache network is responsible for handling a library of $K$ files $\mathcal{W}=\{W_1,\dots,W_K\}$, whereas the popularity profile follows a known distribution $\mathcal{P}=\{p_1,\dots,p_K\}$.

The network operates in two phases, namely, \emph{cache content placement} and \emph{content delivery}. In the cache content placement phase each node caches $M\le K$ files randomly from the library according to their popularity distribution $\mathcal{P}=\{p_1,\dots,p_K\}$ with replacement, independent of other nodes.
Also note that, throughout the paper we assume that $M\ll K$, unless otherwise stated.


Consider a time block during which $n$ files are requested from the servers sequentially that are chosen uniformly at random. Let $D_i$ denote the number of requests (demands) arrived at server $i$. Then for large $n$ we have $D_i \sim \mathrm{Po(1)}$ for all $1\le i\le n$.


For library popularity profile $\mc{P}$, we consider two probability distributions, namely, Uniform and Zipf with parameter $\gamma$. In the  Uniform distribution we have
\begin{equation*}
p_i=\frac{1}{K},\quad i=1,\dots,K,
\end{equation*}
which considers equal popularity for all the files. In Zipf distribution the request probability of the $i$-th popular file is inversely proportional to its rank as follows
\begin{equation*}
p_i=\frac{1/i^\gamma}{\sum\limits_{j=1}^{K}1/j^\gamma},\quad i=1,\dots,K,
\end{equation*}
which has been confirmed to be the case in many practical applications \cite{Zipf1_99,Zipf2_07}.

For any given cache content placement, an assignment strategy determines how each request is mapped to a server. 
Let $T_i$ denote the number of requests assigned to server $i$ at the end of mapping process.

Now, for each strategy we define the following metrics.
\begin{definition}[Communication Cost and Maximum Load]
~
\begin{itemize}
\item The \emph{communication cost} of a strategy is the average number of hops between the requesting node and the serving node, denoted by $C$. 
\item The \emph{maximum load} of a strategy is the maximum number of requests assigned to a single node, denoted by $L=\max_{1\le i\le n} T_i$. 
\end{itemize}
\end{definition}

\section{Nearest Replica Strategy}\label{sec:Nearest_Replica_Strategy}
The simplest strategy for assigning requests to servers is to allocate each request to the nearest node that has cached the file. This strategy, formally defined below, leads to the minimum communication cost, while does not try to reduce maximum load.

\begin{definition} [Strategy I: Nearest Replica Strategy]
In this strategy each request is assigned to the nearest node --in the sense of the graph shortest path distance-- which has cached the requested file. If there are multiple choices ties are broken randomly.
\end{definition}

Consider the set of nodes that have cached file $W_j$, say $S_j$. According to Strategy I, each demand from node $u$ for file $W_j$ will be served by $\arg\min_{v \in S_j} d_G(u,v)$. This induces a Voronoi Tessellation on the torus corresponding to file $W_j$ which we denote by $\mathcal{V}_j$. Then, alternatively, we can define Strategy I as assigning each request of file $W_j$ to the corresponding Voronoi cell center. 

In order to analyze the maximum load imposed on each node, we should investigate the size of such Voronoi regions. The following Lemma is in this direction.

\begin{lemma}\label{WindMillLemma}
Under the Uniform popularity distribution,  the maximum cell size (number of nodes inside each cell) of $\mathcal{V}_j$, $1\le j\le K$, is at most $O\left(K \log n / M\right)$ w.h.p. In particular, every Voronoi cell centered at any node is contained in a sub-grid of size $r\times r$  with $r=O\left(\sqrt{K \log n / M}\right)$.
	 Furthermore, if $K=n^{1-\epsilon}$, for some  constant $0<\epsilon<1$, and $M=\Theta(1)$, then there exists a Voronoi cell of size $\Theta\left(K \log n / M\right)$ w.h.p.
\end{lemma}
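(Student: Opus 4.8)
The plan is to translate everything into a statement about the random set $S_j$ of nodes caching a fixed file $W_j$. Under the Uniform profile, each node independently caches $W_j$ (in at least one of its $M$ draws) with probability $p = 1-(1-1/K)^M = \Theta(M/K)$, and these membership indicators are independent across the $n$ nodes because the caches are filled independently. Thus $S_j$ is a Bernoulli$(p)$ random subset of the torus, and the lemma becomes a statement about sparse/empty regions of such a set. I would also record the uniform estimate $|B_r(u)| = \Theta(r^2)$ for every $u$, which is exactly what the torus reduction in the Remark buys us.

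\textbf{Upper bound.} The key observation is that if $u$ lies in the cell of $v\in S_j$, then $d_G(u,v)$ is at most the distance from $u$ to its \emph{nearest} $S_j$-node; so it suffices to show that w.h.p. every node has some $S_j$-node within distance $r := O(\sqrt{K\log n/M})$. For fixed $u$, $\Pr{B_r(u)\cap S_j=\emptyset} = (1-p)^{|B_r(u)|} \le \exp\!\left(-\Theta(r^2 M/K)\right)$, and choosing the hidden constant in $r$ large enough makes this at most $n^{-(c+2)}$. A union bound over the $n$ nodes (and, if a simultaneous statement over all $K\le n$ files is wanted, over the files as well) then shows that w.h.p. the nearest-replica distance is at most $r$ everywhere. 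Consequently each cell of $\mathcal{V}_j$ is contained in $B_r(v)$ for its center $v$, which sits inside an $O(r)\times O(r)$ sub-grid and contains at most $|B_r(v)| = O(r^2)=O(K\log n/M)$ nodes. This yields both assertions of the first part.

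\textbf{Lower bound.} In the regime $K=n^{1-\epsilon}$, $M=\Theta(1)$ (so $p=\Theta(n^{\epsilon-1})$), I would exhibit an \emph{isolated} caching node: call $v\in S_j$ isolated at scale $s$ if $B_s(v)\cap S_j=\{v\}$. A one-line triangle-inequality argument shows that the cell of such a $v$ contains $B_{s/2}(v)$, hence has $\Theta(s^2)$ nodes. I would set $s=\Theta(\sqrt{K\log n/M})$ and manufacture independence by tiling the torus into disjoint $3s\times 3s$ blocks. For the $i$-th block let $G_i$ be the event that its central $s\times s$ sub-square contains exactly one $S_j$-node while the surrounding frame contains none; when $G_i$ occurs the unique node is isolated at scale $s$, since its radius-$s$ ball stays inside the block. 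Tuning $s$ so that $ps^2=\Theta(\log n)$ gives $\Pr{G_i}=\Theta(s^2)\,p\,(1-p)^{\Theta(s^2)}=\Theta(\log n)\,n^{-c'}$, and with $N=\Theta(n/s^2)=\Theta(n^\epsilon/\log n)$ disjoint blocks we get $\sum_i\Pr{G_i}=\Theta(n^{\epsilon-c'})\to\infty$ for $c'<\epsilon$. Because the blocks are disjoint the $G_i$ are independent, so $\Pr{\bigcap_i \overline{G_i}}\le \exp\!\left(-\sum_i\Pr{G_i}\right)=\exp\!\left(-\Theta(n^{\epsilon-c'})\right)$, which is super-polynomially small. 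Hence w.h.p. some block produces an isolated node, and a cell of size $\Theta(K\log n/M)$; combined with the matching upper bound this gives the claimed $\Theta$.

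\textbf{Main obstacle.} I expect the delicate point to be the lower bound, and specifically upgrading ``exists with constant probability'' to ``exists w.h.p.'' A merely empty ball of radius $s$ does \emph{not} force a single large cell, since caching nodes just outside it can slice the empty region into many thin slivers. The fix is the two-sided event ``exactly one node in the center with an empty frame,'' which simultaneously guarantees genuine isolation (a lower bound on one cell) and, through the disjointness of the blocks, the independence needed to invoke a Chernoff-type bound. The remaining care is bookkeeping: verifying $p=\Theta(M/K)$ and $|B_r(u)|=\Theta(r^2)$ uniformly, which is where the torus hypothesis is used.
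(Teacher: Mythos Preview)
Your proposal is correct, and both halves take a different route from the paper.

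For the upper bound, the paper fixes a caching node $u$ and introduces four ``windmill'' sectors $A_1(u),\ldots,A_4(u)\subset B_r(u)$, shows that w.h.p.\ each sector contains another replica of $W_j$, and then argues geometrically that the Voronoi cell is trapped in the axis-aligned box spanned by these four nearby replicas. Your argument is simpler: you bound the nearest-replica distance from \emph{every} node (not just caching nodes) by $r$, which immediately forces each cell into $B_r(v)\subset$ an $O(r)\times O(r)$ sub-grid. Both give the same $O(K\log n/M)$ bound; the paper's construction yields a slightly tighter bounding rectangle but at the cost of extra geometry that is not needed for the stated lemma.

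For the lower bound, the paper defines correlated indicators $Y_{u,j}=\1\{u\in S_j,\ B_r(u)\cap S_j=\{u\}\}$, computes $\Ex{\sum_u Y_{u,j}}=\Theta(n^{\epsilon/2})$, bounds the variance by a case analysis on $d_G(u,v)$, and applies Chebyshev to get existence with probability $1-O(\log n/n^{\epsilon/2})$. Your disjoint-block tiling manufactures genuinely independent events $G_i$, so you can replace the second-moment argument by a direct product bound $\prod_i(1-\Pr{G_i})\le\exp(-\sum_i\Pr{G_i})$, which in fact yields a stronger (stretched-exponential) failure probability. The price is the need to tune the constant in $s$ so that $9ps^2<\epsilon\log n$, but since any fixed constant still gives $s^2=\Theta(K\log n/M)$ this is harmless. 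Either method works; yours avoids the covariance computation, while the paper's avoids the block-geometry bookkeeping.
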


\begin{proof}[Proof of Lemma~\ref{WindMillLemma}]
\textbf{Upper Bound -- } 
Fix a node $u$ and w.l.o.g. assume that 
$u$ is denoted by pair $(0,0)$ in the torus.
With respect to $u$ and some positive number $r>0$  define four areas as follows
\begin{align*}
A_1(u)&:=\{(x, y) : 0\le y \le x/2  \text{ and }  (x, y)\in B_{r}(u)\},\\
A_2(u)&:=\{(x, y): 0\le -x\le y/2
\text{ and }  (x, y)\in B_{r}(u)
\},\\
A_3(u)&:=\{(x, y): 0\le -y\le -x/2 
\text{ and }  (x, y)\in B_{r}(u)
\},\\
A_4(u)&:=\{(x, y): 0\le x\leq -y/2
\text{ and }  (x, y)\in B_{r}(u)
\},
\end{align*}
which are shown in Fig. \ref{Upper}. It is easy to see that all four areas have the same size, that is 
\begin{align}\label{eq:lower}
|A_1(u)| &= \sum_{y=0}^{\lfloor r/3 \rfloor}\sum_{x=2y}^{r-y}1 \nonumber\\
&=\sum_{y=0}^{\lfloor r/3\rfloor}(r-3y+1) \nonumber\\
&\geq \sum_{y=0}^{\lfloor r/3\rfloor} 3y  \nonumber\\
&\ge r^2/8.
\end{align}

Let us fix some arbitrary  $1\le j\le K$ 
and for every node $u$ define indicator random variable  $X_{u, j}$ taking value $1$ if  $u$ has cached file $W_j$ and there is no node in $A_1(u)$ that has cached file $W_j$, and $0$ otherwise. Then,
\begin{align*}
\Pr{X_{u, j}=1}=
\left(1-\left(1-\frac{1}{K}\right)^M\right)\left(1-\frac{1}{K}\right)^{M(|A_1(u)|-1)},
\end{align*}
where the first term determines the probability that $u$ caches $W_j$ and the second one determines the probability that nodes in $A_1(u)\setminus\{ u\}$ do not cache $W_j$. 
By setting $r=5\sqrt{K\log n/M}$ and applying Inequality (\ref{eq:lower}) we have, 
\begin{align}\label{approx1}
\left(1-\frac{1}{K}\right)^{M(|A_1(u)|-1)} &= \mathrm{e}^{-\frac{25\log n}{8}}(1+o(1)) \nonumber\\
&=O(n^{-3}),
\end{align}
where it follows from \mbox{$1-1/K= \mathrm{e}^{-1/K}(1+o(1))$} and $M/K=o(1)$. Moreover, by using the approximation \mbox{$1-(1-1/K)^M=M(1+o(1))/K$}, we have
\[
\Pr{X_{u,j}=1}\le  \frac{M(1+o(1))}{K\cdot n^3}.
\] 

Therefore applying the union bound over all $n$ nodes and $K$ files implies  that w.h.p. for every $u$ there exists at least one node in $A_1(u)$ which shares a common file with $u$, supported that we choose $r= 5\sqrt{K\log n/M}$. We can similarly prove the same argument for $A_2(u)$, $A_3(u)$ and $A_4(u)$. 

Suppose that $u$ has cached file $W_j$, and we want to find an upper bound for the size of the Voronoi cell centered at $u$ corresponding to $W_j$. In order to do this let us define $v^1=(v^1_x, v^1_y)\in A_1(u)$ to be the nearest node to $u$ with file $W_j$. Similarly define $v^i\in A_i(u)$, $2\le i\le 4$. W.l.o.g.
assume that $u$  is the origin and the nodes are in a $xy$-coordinate system. Define
 \[
B:=\{(x,y) : v^3_x\le x\le v^1_x \text{ and } v^4_y\le  y\le v^2_y\}.
\]
Now we show that the Voronoi cell of $u$ is contained in $B$, and thus the size of $B$ is an upper bound to the size of the Voronoi cell.
Consider Fig. \ref{Upper}. Let us consider  node  $w$ in the complement of $B$ with $w_y>v^2_y$ and $w_x>0$. Assume that  $P_{uw}$ is a shortest path from $w$ to $u$ that passes node $(0, v^2_y)$. By definition of $A_2(u)$, we know  the length of a shortest  path from    $(0, v^2_y)$ to $(v^2_x, v^2_y)$ is $|v^2_x|\leq v^2_y/2$. This shows that $w$ is closer to $v^2$ than $u$, and by definition it  does not belong to the Voronoi cell centered at $u$. We similarly can show that each arbitrary node $w\in B^c$ is closer to either $v^i$'s rather than $u$.  Since we arbitrarily choose $u$ and $1\le j\le K$, there is  sub-grid $B$ that contains every Voronoi cell in $\mathcal{V}_j$, centered at any given $u$.  
So the size of any Voronoi cell centered at an arbitrary node $u$ is bounded from above by $|B|$, which is at most $4r^2=O(K\log n/M)$.

\begin{figure}
\begin{center}
\includegraphics[width=0.5\textwidth]{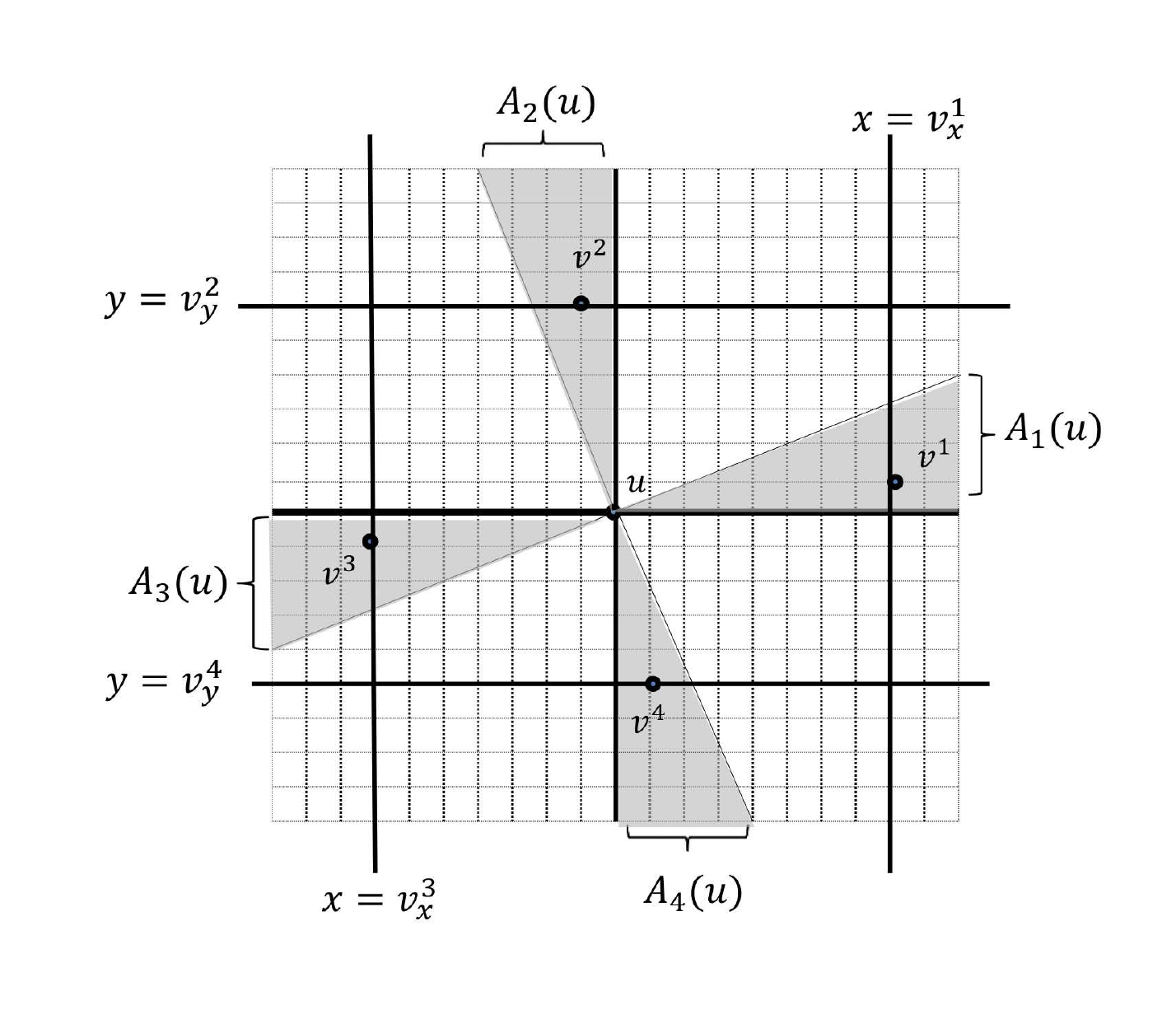}
\end{center}
\caption{Demonstration of regions $A_1(u),\ldots,A_4(u)$ used in the upper bound proof of Lemma~\ref{WindMillLemma}.}\label{Upper}
\end{figure}

\textbf{Lower Bound --} Let us define indicator random variable $Y_{u, j}$ for every $u$ and some fixed  $j$ taking value $1$ if $u$ has cached file $W_j$ and there is no  $v\in B_{r}(u)$ that has cached file $W_j$, and $0$ otherwise.	
Note that $|B_r(u)\setminus\{u\}|=2r(r+1)$.
Then we have  
\[
\Pr{Y_{u,j}=1}=\left(1-\left(1-\frac{1}{K}\right)^M\right)\left(1-\frac{1}{K}\right)^{M[2r(r+1)]}.
\]
By setting $r=\sqrt{{\epsilon\cdot K\cdot \log n /4M}}$ and using similar approximations used in (\ref{approx1}) we have
\[
p \triangleq \Pr{Y_{u,j}=1}=\frac{M(1+o(1))}{K\cdot n^{\epsilon/2}}.
\]
Let $Y_j=\sum_{u}  Y_{u, j}$. Then, we have the following claim.

\begin{clm}
For every $j$ we have $Y_j\ge 1$ with probability $1-o(1)$.
\end{clm}

This claim shows that there exists at least a Voronoi cell of size $\Theta(r^2)=\Theta(K \log n /M)$ which concludes the proof.

Now, in order to prove the claim note that
\[
\Ex{Y_j}=\sum_{u}\Ex{Y_{u,j}}=n\cdot\frac{M(1+o(1))}{K\cdot n^{\epsilon/2}}=Mn^{\epsilon/2}(1+o(1)).
\]
Also, we know that  
\begin{align}\label{var:main}
\Var{Y_j} &= \sum_{u, v}\mathrm{Cov}(Y_{u,j}, Y_{v,j}) \nonumber\\
&=\sum_{u, v}\Ex{Y_{u,j} Y_{v,j}} - \Ex{Y_{u,j}}\Ex{Y_{v,j}} \nonumber\\
&=\sum_{u, v} \Big( \Pr{Y_{u,j}=1, Y_{v,j}=1} \nonumber\\
&\quad\quad\quad -\Pr{Y_{u,j}=1}\cdot \Pr{Y_{v,j}=1}\Big),
\end{align}
where the last equality holds because 
$Y_{u, j}$'s are indicator random variables. 
It is easy to see that for every $u$
and $v$ with $d_G(u, v)>2r$, $\mathrm{Cov}(Y_{u,j}, Y_{v, j})=0$ as cache content placement at different nodes are independent processes. So we only consider pairs $u$ and $v$, with $d_G(u, v)\le 2r$.  Then for each pair of nodes three following cases should be considered:
\begin{itemize}
	\item $u=v$:
	In this case we have 
	\begin{align}\label{var:fisrt}
	\Pr{Y_{u,j}=1, Y_{v,j}=1}-\Pr{Y_{u,j}=1}\cdot \Pr{Y_{v,j}=1}\nonumber \\
	=\Pr{Y_{u,j}=1}-p^2=p(1-p).
	\end{align}
	
	\item $0<d_G(u, v)\le r$:
	By definition of indicator random variables $Y_{u, j}$'s, we have
	\begin{align}\label{var:sec}
	&\Pr{Y_{u,j}=1, Y_{v,j}=1}-\Pr{Y_{u,j}=1}\cdot \Pr{Y_{v,j}=1}\nonumber\\
	&=0-p^2. 
	\end{align}
	\item $r<d_G(u, v)\leq 2r$:
	In this case we have  	
	\begin{align}\label{var:thi}
	&\Pr{Y_{u,j}=1, Y_{v,j}=1}-\Pr{Y_{u,j}=1}\cdot \Pr{Y_{v,j}=1} \nonumber\\
	&=\Pr{Y_{u,j}=1|Y_{v,j}=1}\Pr{Y_{v,j}=1}-p^2 \nonumber \\ 
	&\leq \frac{M(1+o(1))}{K}p-p^2\leq \frac{2M}{K}p. 
	\end{align}
\end{itemize}

Now let us split the summation (\ref{var:main}) based on $d_G(u,v)$ as follows
\begin{align*}
\Var{Y_j} &=\sum_{u}\mathrm{Cov}(Y_{u,j},Y_{u,j})\\
&\quad+\sum_{u} \sum_{v: 0 < d_G(u, v)\leq r}\mathrm{Cov}(Y_{u,j}, Y_{v,j}) \\
&\quad+ \sum_{u} \sum_{v: r<d_G(u, v)\leq 2r}\mathrm{Cov}(Y_{u,j}, Y_{v,j}).
\end{align*}

Applying results (\ref{var:fisrt}-\ref{var:thi}) yields
\begin{align*}
\Var{Y_j} &\leq np(1-p)-n|B_r(u)|p^2+n|B_{2r}(u)|\frac{2Mp}{K} \nonumber\\
&\leq np+ 4nr(2r+1)\frac{2Mp}{K} \nonumber\\
&\leq np+6\epsilon np \log n \nonumber\\
&\leq 7np\log n,  
\end{align*}
where we use the fact that 
$4r(2r+1)\le 9r^2\leq 3\epsilon K \log n$.
Applying Chebychev's inequality  leads to 
\begin{align*}
\Pr{|Y_j-\Ex{Y_j}|\ge \Ex{Y_j}/2}&\le \frac{4\Var{Y_j}}{\Ex{Y_j}^2}\\
&\le \frac{28 np\log n}{n^2p^2}\\
&=\frac{28\log n}{np}=o(1).
\end{align*}
Therefore, $Y_j$ concentrates around its mean, \ie, $np = \Theta(n^{\epsilon/2})$, which proves the claim.
\end{proof}

Now, we are ready to present our main results for this section which characterize the maximum load of Strategy I, for two different parameter regimes, in Theorems~\ref{thm:imblance}~and~\ref{thm:Strategy_I_Large}.

\begin{thm}\label{thm:imblance}
Suppose that $K=n^{1-\epsilon}$, for some constant $0< \epsilon <1$, and $M=\Theta(1)$. Then, under Uniform distribution $\mathcal{P}$, Strategy I achieves maximum load of $L=\Theta(\log n)$ w.h.p.
\end{thm}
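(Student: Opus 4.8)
The plan is to prove the matching bounds $L = O(\log n)$ and $L = \Omega(\log n)$ separately, in both cases using Lemma~\ref{WindMillLemma} to control Voronoi cell sizes and then exploiting the independence between the random cache placement and the random request sequence. The key observation is that, once we condition on the cache placement, Strategy~I routes every request deterministically, so the load $T_i$ of server $i$ becomes a sum of $n$ independent indicators over the requests. Concretely, a request for file $W_j$ originating at node $u$ is routed to $i$ precisely when $i$ caches $W_j$ and $u$ lies in the Voronoi cell of $i$ in $\mathcal{V}_j$; writing $c_{ij}$ for the size of that cell, a single request hits $i$ with probability $q_i = \frac{1}{Kn}\sum_{j:\, i \text{ caches } W_j} c_{ij}$, so $T_i \sim \mathrm{Bin}(n, q_i)$ conditionally on the placement. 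Since the request sequence is independent of the placement, conditioning on any cache-placement event leaves this conditional distribution intact.

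For the upper bound I would first invoke the upper-bound part of Lemma~\ref{WindMillLemma}: w.h.p. every Voronoi cell has size $O(K\log n/M)$. On this good placement event each of the $M=\Theta(1)$ cells of server $i$ contributes $O(K\log n/M)$, hence $\sum_{j:\, i \text{ caches } W_j} c_{ij} = O(K\log n)$ and $q_i = O(\log n/n)$, giving $\Ex{T_i} = nq_i = O(\log n)$. A Chernoff bound for the binomial $T_i$ then yields $\Pr{T_i > C\log n} \le n^{-2}$ for a suitably large constant $C$ (the Poisson-type tail $(e\mu/t)^t$ decays polynomially in $n$ with an exponent that grows with $C$). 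Taking a union bound over the $n$ servers and adding the failure probability of the good placement event gives $L = O(\log n)$ w.h.p.

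For the lower bound I would use the ``furthermore'' part of Lemma~\ref{WindMillLemma}, which guarantees that w.h.p. there exist a file $W_{j^*}$ and a center $i^*$ whose Voronoi cell has size $c^* = \Theta(K\log n/M) = \Theta(n^{1-\epsilon}\log n)$. By definition $i^*$ caches $W_{j^*}$, so $T_{i^*}$ is at least the number $N$ of requests for $W_{j^*}$ originating inside this cell. Conditioned on the placement (which fixes $i^*$, $j^*$ and the cell), $N \sim \mathrm{Bin}(n, c^*/(Kn))$ has mean $c^*/K = \Theta(\log n)$ because $K = n^{1-\epsilon}$ and $M = \Theta(1)$. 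A lower-tail Chernoff bound then gives $N \ge \tfrac12\Ex{N} = \Omega(\log n)$ with probability $1 - e^{-\Omega(\log n)} = 1 - n^{-\Omega(1)}$, whence $L \ge T_{i^*} = \Omega(\log n)$ w.h.p. Combining the two directions yields $L = \Theta(\log n)$ w.h.p.

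I expect the main obstacle to be bookkeeping the two independent sources of randomness rather than any single hard estimate: one must condition on the cache placement so that the cell sizes, and hence $q_i$, become fixed and the request-driven load is a genuine sum of independent indicators, while simultaneously transferring the w.h.p. cell-size guarantees of Lemma~\ref{WindMillLemma} through the conditioning. The quantitative crux is that the per-server upper tail must be pushed below $1/n$ so the union bound over all $n$ servers survives; since $\Ex{T_i} = \Theta(\log n)$, this forces $C$ to be chosen relative to the hidden constant in the mean, but the polynomial Chernoff tail makes this routine. The conceptual heart, worth flagging explicitly, is why the load is $\Theta(\log n)$ in this regime: each file is requested only $\Theta(n^{\epsilon})$ times, yet its largest Voronoi cell covers a $\Theta(n^{-\epsilon}\log n)$ fraction of the nodes, and the product of these two effects is exactly $\Theta(\log n)$.
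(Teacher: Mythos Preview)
Your proposal is correct and follows essentially the same route as the paper: invoke Lemma~\ref{WindMillLemma} to bound all Voronoi cell sizes by $O(K\log n/M)$ for the upper bound and to exhibit one cell of size $\Theta(K\log n/M)$ for the lower bound, then observe that conditioned on the placement the load of a server is (a sum dominated by) a binomial with mean $\Theta(\log n)$, and finish with Chernoff plus a union bound. The only cosmetic difference is that the paper upper-bounds via the sub-grid containing all of a server's cells rather than summing the individual cell sizes, and one should note that ties are broken randomly so routing is not literally deterministic given the placement---but this does not affect either tail bound.
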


\begin{proof}
	Consider node $u$ which has cached a set of distinct files, say $S$, with $|S|\le M$. Applying Lemma \ref{WindMillLemma} shows that all  Voronoi cells centered at $u$ corresponding to cached files at $u$ are contained in a sub-grid of size  at most $O(K \log n /M)$  w.h.p. Also in each round, every arbitrary node  requests for a file in $S$ with probability  $|S|/nK\le M/nK$, as each request  randomly chooses its origin and type. Hence, by union bound, a node in the sub-grid may request for a file in $S$  with probability at most $O(K\log n/M)\cdot (M/nK)=O(\log n/n)$. Since there are $n$ requests, the expected number of requests  imposed to node $u$ is $O(\log n)$. Now using a Chernoff bound (e.g., see Appendix \ref{app:bounds}) shows that w.h.p. $u$ has to handle   at most $O(\log n)$ requests.

On the other hand, to establish a lower bound on the maximum load we proceed as follows. Lemma \ref{WindMillLemma} shows that there exits a Voronoi cell in $\mathcal{V}_j$, for some $j$, such that the center node should handle the requests of at least $\Theta(K \log n /M)$ nodes w.h.p. Also each node in the cell may  request for file $W_j$ with probability $1/nK$.  So on average there are $\Theta(\log n/M)$ requests imposed on the cell center. Similarly, by  a Chernoff bound, one can see that this node experiences the load  $\Theta(\log n /M)$, which concludes the proof for constant $M$.
\end{proof}

\begin{remark}
It should be noted that the same result of $\Theta(\log n)$ for the maximum load can also be proved for the Zipf distribution. That is because the content placement distribution is chosen proportional to the file popularity distribution $\mathcal{P}$, and consequently this result is insensitive to $\mathcal{P}$. However, the proof involves lengthy technical discussions which we omit in this paper. For further numerical investigation on this remark refer to \S\ref{sec:Simulations}. 
\end{remark}

\begin{thm}\label{thm:Strategy_I_Large}
	Suppose that $K=n$ and $M=n^{\alpha}$, for some $0< \alpha<1/2$. Then, under the Uniform distribution, the maximum load is in the interval $[\Omega(\log n/\log\log n), O(\log n)]$ w.h.p. 
\end{thm}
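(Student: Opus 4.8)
The plan is to condition on the cache placement and then analyse the loads $\{T_u\}$ as (conditionally) Poisson random variables, handling the two bounds separately.

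\medskip
\noindent\textbf{Upper bound ($O(\log n)$).} First I would condition on the placement and invoke Lemma~\ref{WindMillLemma}, so that every Voronoi cell of $\mathcal{V}_j$ centered at a node $u$ lies inside a common sub-grid of size $O(K\log n/M)$ around $u$; this event holds w.h.p. Since each node emits $\mathrm{Po}(1)$ requests of a uniformly chosen type, the number of requests of type $j$ produced at any fixed node is $\mathrm{Po}(1/K)$, and these are independent over all (node, type) pairs. Hence, conditioned on the placement, the load $T_u$ is a sum of independent Poissons, i.e. $T_u\sim\mathrm{Po}(\lambda_u)$ with $\lambda_u=\frac1K\sum_{j\in S(u)}|\mathcal{V}_j(u)|$, where $S(u)$ is the set of files cached at $u$ and $|\mathcal{V}_j(u)|$ is the size of the cell of $\mathcal{V}_j$ centered at $u$. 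Because $u$ caches at most $M$ files and each such cell has size $O(K\log n/M)$, we obtain $\lambda_u=O(\log n)$ for every $u$. A Poisson/Chernoff tail bound then gives $\Pr{T_u\ge C\log n}\le n^{-2}$ for a large enough constant $C$, and a union bound over the $n$ nodes yields $L=O(\log n)$ w.h.p. This mirrors the argument already used in the proof of Theorem~\ref{thm:imblance}.

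\medskip
\noindent\textbf{Lower bound ($\Omega(\log n/\log\log n)$).} This is the more delicate direction, and it is where the balls-into-bins phenomenon enters. The key structural observation is that each request (origin $v$, type $j$) is served by a \emph{unique} node (the nearest $W_j$-replica to $v$), so the index sets $\{(v,j):j\in S(u),\ v\in\mathcal{V}_j(u)\}$ feeding distinct loads are disjoint; consequently, conditioned on the placement, the loads $\{T_u\}$ are \emph{mutually independent} Poissons. Summing the means and using $K=n$ gives $\sum_u\lambda_u=\frac1K\sum_{j=1}^K n=n$, so the average mean is $1$. Combining this with the uniform bound $\lambda_u=O(\log n)$, a one-line averaging argument (splitting $\sum_u\lambda_u$ over $\{u:\lambda_u\ge c\}$ and its complement) shows that at least $\Omega(n/\log n)$ nodes are ``heavy,'' with $\lambda_u\ge c$ for a constant $c>0$. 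For such a node and $k=(1-\delta)\log n/\log\log n$, Stirling's approximation gives $\Pr{T_u\ge k}\ge e^{-c}c^k/k!\ge n^{-(1-\delta')}$ for some $\delta'\in(0,1)$. Finally, independence over the $\Omega(n/\log n)$ heavy nodes yields $\Pr{\max_u T_u< k}\le (1-n^{-(1-\delta')})^{\Omega(n/\log n)}\le \exp\!\left(-\Omega(n^{\delta'}/\log n)\right)$, which is far smaller than required, so $L=\Omega(\log n/\log\log n)$ w.h.p.

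\medskip
\noindent\textbf{Main obstacle.} I expect the crux to be twofold. First, recognising and justifying the conditional independence of the loads is what converts the problem into a clean occupancy problem; the Poissonised demand model ($D_i\sim\mathrm{Po}(1)$) is precisely what makes this exact and lets one sidestep any de-Poissonisation. Second, the quantitative heart is the tail estimate at the scale $k\approx\log n/\log\log n$ together with the existence of enough heavy bins: one must verify that the $\Omega(n/\log n)$ nodes with $\lambda_u=\Theta(1)$ truly exist (using both $\sum_u\lambda_u=n$ and the worst-case $\lambda_u=O(\log n)$ from Lemma~\ref{WindMillLemma}) and that the per-node tail $n^{-(1-\delta')}$ beats the reciprocal of that count. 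The remaining pieces (Poisson tail computations and the union bound over nodes) are routine.
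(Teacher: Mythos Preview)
Your proposal is correct. The upper bound is essentially identical to the paper's (both defer to the argument of Theorem~\ref{thm:imblance} via Lemma~\ref{WindMillLemma}).

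For the lower bound you take a genuinely different route from the paper. The paper establishes the stronger intermediate fact that \emph{every} node $u$ has $\lambda_u=\Omega(1)$: it fixes $u$, shows via a correlated-Chernoff argument (Lemma~\ref{mod-chernoff}) that a constant fraction of the files cached at $u$ have no other replica inside $B_{\sqrt{K/2M}}(u)$, and hence the Voronoi cell of each such file contains $B_{r/2}(u)$, giving $\Pr{\text{$u$ serves a request}}=\Omega(1/n)$. It then appeals to the fact that the maximum of $n$ Poisson$(c)$ variables is $\Omega(\log n/\log\log n)$. Your argument bypasses this per-node structural analysis entirely: you use the global identity $\sum_u\lambda_u=n$ (which follows since the cells of each $\mathcal{V}_j$ partition the torus, and with $K=n$, $M=n^\alpha$ every file is cached somewhere w.h.p.) together with the already-established $\lambda_u=O(\log n)$ to deduce by averaging that $\Omega(n/\log n)$ nodes have $\lambda_u\ge c$. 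You are also more explicit than the paper about why the $T_u$'s are (conditionally) independent, namely that the index sets $\{(v,j)\}$ feeding distinct servers are disjoint; under the paper's exact $n$-request model the $T_u$'s are negatively associated rather than independent, but this only strengthens the product bound $\Pr{\max_u T_u<k}\le\prod_u\Pr{T_u<k}$, so your conclusion stands either way. Your approach is more elementary (no correlated Chernoff, no per-node cell-size lower bound), at the price of the weaker intermediate statement that only $\Omega(n/\log n)$ nodes, rather than all $n$, are heavy---which is still ample for the final estimate.
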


\begin{proof}[Proof of Theorem \ref{thm:Strategy_I_Large}]
	To establish  upper bound $O(\log n)$ for the maximum load, we follow the first part of proof of Theorem \ref{thm:imblance}.
	To obtain a lower bound, consider an arbitrary server $u$ that has cached file set $S$ with  $s$ distinct files. Note that by  Lemma 
	\ref{lem:GoodnessProperty}, we have for every node $u$, $s=\Theta(M)$ with high probability.
	Let us define the indicator random variable
	$X_{u, j}$, $W_j\in S$, taking $1$ if the nearest replica of $W_j$ is outside of $B_r(u)$, where $r=\sqrt{K/2M}$  and zero otherwise. 
	It is easy to see that $X_{u,j}$'s are correlated. For example, consider the set of files $T=\{W_{j_1}, W_{j_2},\ldots, W_{j_t}\}\subset S$, where $X_{u,j'}=1$ for every $W_{j'}\in T$. Then conditioned on this event, each node in $B_r(u)$ has cached files from a subset of the library of size $K-|T|$.
	Then probability that a node in $B_r(u)$ caches $W_j$ is at most $M/(K-t)$.
	Hence,  for every $W_j\in S$,
	\begin{align*}
	&\Pr{X_{u,j}=1 |\{X_{u, j'}=x_{u,j'}, W_{j'}\in S\setminus\{W_j\}\}}\\
	&\geq\left(1-\frac{M}{K-\sum_{ W_{j'}\in S\setminus\{W_j\}}x_{u, j'}}\right)^{2r(r+1)}\\
	&\geq \left(1-\frac{M}{K-M+1}\right)^{2r(r+1)}=\mathrm{e}^{-\Omega(1)}=p,
	\end{align*}
	where $|B_r(u)\setminus\{u\}|=2r(r+1)=\Theta(K/M)$ and hence $p$ is a constant.
	Let $Z=\sum_{W_j\in S}X_{u,j}$ and then $\Ex{Z}\geq s\cdot p$. Using a Chernoff bound for moderately correlated indicator random variables (e.g., see Lemma \ref{mod-chernoff}) implies that 
	\[
	\Pr{Z<sp/2}=o(1/n^2).
	\] 
	Therefore $B_r(u)$ does not contain any replica of at least $p/2$ fraction of files cached at $u$, namely $S'$. Using the union bound over all nodes we deduce the similar statement for every node with probability at least $1-o(1/n)$.
	Therefore, for every $u$ we have,
	\[
	\Pr{\text{$u$ severs a request}} \geq \frac{|B_{r/2}(u)|}{n}\cdot \frac{|S'|}{K}=\Omega(1/n)
	\]
	where it follows from $|S'|=\Theta(M)$, $|B_{r/2}(u)|=\Theta(K/M)$.
	Since there are $n$ requests, it is easy to see that   the load of each server is bounded from below by a  Poisson distribution $\mathrm{Po}(c)$, where $c$ is a constant. 
	On the other hand, it is known that (e.g., see \cite{Devroye85}) the maximum number taken by  $n$ Poisson distribution $\mathrm{Po}(c)$ is $\Omega(\log n/\log\log n)$ w.h.p. and hence the lower bound is proved.
\end{proof}

Next, we investigate the communication cost of Strategy~I in the following  theorem.

\begin{thm}\label{thm:Strategy_I_CommCost}
	Under the Uniform popularity distribution, Strategy I achieves the communication cost $C=\Theta(\sqrt{K/M})$, for every $M\ll K$. Under Zipf popularity distribution with $M=\Theta(1)$, it achieves
	\begin{equation}
		C = \left\{
		\begin{array}{llll}
		\Theta\left(\sqrt{K/M}\right) &: &\quad 0 < \gamma <1, \\
		\Theta\left(\sqrt{K/M \log K}\right) &: &\quad  \gamma=1, \\
		\Theta\left(K^{1-\gamma/2}/\sqrt{M}\right) &: &\quad  1< \gamma <2, \\ 
		\Theta\left(\log K /\sqrt{M}\right) &: &\quad \gamma=2, \\
		\Theta\left(1/\sqrt{M}\right) &: &\quad \gamma>2.
		\end{array}
		\right.	
		\end{equation} 
\end{thm}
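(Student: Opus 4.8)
The plan is to compute $C$ directly as the popularity-weighted average, over all files, of the expected distance from a uniformly random requesting node to the nearest replica of that file. Since every request independently picks its file according to $\mathcal{P}$ and its origin uniformly at random, and both are independent of the random cache placement, I would write
\[
C = \sum_{j=1}^{K} p_j \, \Ex{D_j},
\]
where $D_j$ denotes the graph distance from a fixed node $u$ to the nearest node that has cached $W_j$ (by vertex-transitivity of the torus this does not depend on $u$). The theorem then reduces to (i) a per-file estimate of $\Ex{D_j}$, and (ii) summing it against the two popularity profiles.

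For step (i), let $q_j := 1-(1-p_j)^M$ be the probability that a single node caches $W_j$. Since placements at distinct nodes are independent, $\Pr{D_j > d} = (1-q_j)^{|B_d(u)|}$, and using $|B_d(u)| = 2d(d+1)+1 = \Theta(d^2)$ on the torus this gives $\Pr{D_j>d} = \exp(-\Theta(q_j d^2))$. Hence
\[
\Ex{D_j} = \sum_{d\ge 0}\Pr{D_j>d} = \sum_{d\ge 0}\exp\!\big(-\Theta(q_j d^2)\big) = \Theta\!\big(1/\sqrt{q_j}\big)
\]
by comparison with a Gaussian integral, valid whenever $q_j = o(1)$. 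Because $M\ll K$ forces $p_j=o(1)$, we have $q_j = \Theta(M p_j)$, and therefore $\Ex{D_j} = \Theta(1/\sqrt{M p_j})$.

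For step (ii), the Uniform case is immediate: $p_j = 1/K$ for all $j$, so $\Ex{D_j} = \Theta(\sqrt{K/M})$ uniformly and $C = \sum_j p_j\,\Theta(\sqrt{K/M}) = \Theta(\sqrt{K/M})$. For the Zipf case, substituting $p_j = j^{-\gamma}/H$ with $H = \sum_{\ell=1}^{K}\ell^{-\gamma}$ yields
\[
C = \Theta\!\left(\frac{1}{\sqrt M}\sum_{j=1}^{K}\sqrt{p_j}\right)
= \Theta\!\left(\frac{1}{\sqrt M}\,\frac{\sum_{j=1}^{K} j^{-\gamma/2}}{\sqrt{\sum_{\ell=1}^{K}\ell^{-\gamma}}}\right).
\]
The result then follows from the standard partial-sum asymptotics of the $p$-series: $\sum_{\ell\le K}\ell^{-s}$ equals $\Theta(K^{1-s})$ for $s<1$, $\Theta(\log K)$ for $s=1$, and $\Theta(1)$ for $s>1$. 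Applying this to the numerator (exponent $\gamma/2$) and the denominator (exponent $\gamma$) and splitting at the thresholds $\gamma\in\{1,2\}$ reproduces the five stated regimes after elementary simplification.

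The main obstacle is making the per-file estimate $\Ex{D_j}=\Theta(1/\sqrt{q_j})$ fully rigorous, i.e.\ pinning down both the constant-order upper and lower bounds while respecting the finite geometry. Two points need care. First, boundary/torus effects: one must confirm $|B_d(u)|=\Theta(d^2)$ up to the relevant scale and truncate the tail sum at the diameter $\Theta(\sqrt n)$, which forces verifying $1/\sqrt{q_j}\ll\sqrt n$ in the operating regimes (this holds since $K/M\le n$ there). Second, for very unpopular Zipf files the expected replica count $n q_j$ can fall toward $1$, so the clean $\Theta(1/\sqrt{q_j})$ law degrades; one must check that the total popularity mass of such files contributes only a lower-order term, so that the dominant contribution---coming from the tail of the rank distribution when $\gamma<2$ and from the head when $\gamma>2$---indeed fixes the stated order.
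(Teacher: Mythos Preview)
Your proposal is correct and follows essentially the same route as the paper: define $q_j=1-(1-p_j)^M$, argue $\Ex{D_j}=\Theta(1/\sqrt{q_j})$, write $C=\sum_j p_j\,\Theta(1/\sqrt{q_j})$, and finish with the standard $p$-series asymptotics for $\sum_{j\le K} j^{-s}$. The only difference is cosmetic: the paper derives $\Ex{D_j}=\Theta(1/\sqrt{q_j})$ by the heuristic ``geometric number of probed nodes $\Rightarrow$ $1/q_j$ trials $\Rightarrow$ distance $\Theta(1/\sqrt{q_j})$ on a grid,'' whereas you compute the tail $\Pr{D_j>d}=(1-q_j)^{|B_d(u)|}$ and sum, and you also flag the finite-diameter and low-replication caveats that the paper simply omits.
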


\begin{proof}[Proof of Theorem~\ref{thm:Strategy_I_CommCost}]	
	Assume a request from an arbitrary node $u$ for file $W_j$. The probability that this file is cached at another node $v$ is $q_j :=1-(1-p_j)^M$. Cache content placement at different nodes is independent. Thus, the  number of nodes which should be probed is a geometric random variable with success probability $q_j$. This results in the average $1/q_j$ trials that leads to expected distance of 
	\begin{equation}
	\Theta\left(\frac{1}{\sqrt{q_j}}\right)=\Theta\left(\frac{1}{\sqrt{1-(1-p_j)^M}}\right).
	\end{equation}
	When averaged over different files we will have
	\begin{equation}
	C=\sum_{j=1}^K {p_j\Theta\left(\frac{1}{\sqrt{1-(1-p_j)^M}}\right)}.
	\end{equation}
	\begin{itemize}
		\item For Uniform distribution we have $p_j=1/K$ and then
		\begin{equation}
		C=\Theta(\sqrt{K/M}).
		\end{equation}
		\item For Zipf distribution with $M=\Theta(1)$ we have
		\begin{align}\label{final}
		C &= \sum_{j=1}^K {p_j\Theta\left(\frac{1}{\sqrt{1-(1-p_j)^M}}\right)} \nonumber \\
		&= \sum_{j=1}^K p_j \Theta \left( \frac{1}{\sqrt{p_j M}} \right) \nonumber \\ 
		&= \Theta\left(      		\frac{\sum_{j=1}^Kj^{-\gamma/2}}{\left(M\sum_{j=1}^K j^{-\gamma}\right)^{1/2}}  \right). \nonumber \\
		\end{align}
	\end{itemize}
	Define $\Lambda(\gamma) :=\sum_{j=1}^Kj^{-\gamma}$, for every $\gamma$.		
	On the other hand  it is known that  for every $\gamma>0$ (\eg, see \cite{JiTLC15})

	\begin{equation}\label{zipfestimate}
	\Lambda(\gamma)=\left\{
	\begin{array}{ll}
	\Theta\left(K^{1-\gamma}\right),  & \quad 0 < \gamma <1, \\
	\Theta\left(\log K\right), &\quad \gamma=1, \\
	\Theta(1), &\quad \gamma>1.
	\end{array}
	\right.	
	\end{equation}

	Now inserting the above equations  into (\ref{final}) completes the proof. 
\end{proof}

Theorem \ref{thm:Strategy_I_CommCost}  shows how non-uniform file popularity reduces communication cost. The skew in file popularity is determined by the parameter $\gamma$ which will affect the communication cost. For example, for $\gamma<1$ communication cost is similar to the Uniform distribution, while for $\gamma > 2$, it becomes independent of $K$. 

Since in Strategy I we have assigned each request to the nearest replica, Theorem~\ref{thm:Strategy_I_CommCost} characterizes  the minimum communication cost one can achieve. 
However, Theorems~\ref{thm:imblance}~and~\ref{thm:Strategy_I_Large} show a logarithmic growth for the maximum load as a function of network size $n$. 
This imbalance in the network load is because in Strategy I each request assignment does not consider the current load of servers. A natural question is whether, at each request allocation, one can use a very limited information of servers' current load in order to reduce the maximum load. Also one can ask how does this affect the communication cost.
We address these questions in the following section.

\section{Proximity-Aware Two Choices Strategy}\label{sec:Proximity-Aware_Two_Choices_Strategy}
Strategy I introduced in the previous section will result in the minimum communication cost, while, the maximum load for that strategy is of order $\Omega\left(\log n/\log\log n\right)$. In this section we investigate an strategy which will result in an exponential decrease in the maximum load, \ie, reduces maximum load to $\Theta\left(\log \log n \right)$, formally defined as follows.

\begin{definition}[Proximity-Aware Two Choices Strategy]
For each request born at an arbitrary node $u$ consider two uniformly at random chosen nodes from $B_r(u)$, that have cached the requested file. Then, the request is assigned to the node with lesser load. Ties are broken randomly.
\end{definition}

For the sake of illustration, first, we consider some examples in the following.

\begin{example}[$M=K$ and $r=\infty$\footnote{It should be noted that $r\ge \sqrt{n}$ (including $r=\infty$) is equivalent to $r= \sqrt{n}$. Thus in this paper we use $r=\sqrt{n}$ and $r=\infty$ alternatively.}]\label{ex:PO2C_M=K_r=infty}
In this example each node can store all the library and there is no constraint on proximity. As mentioned in \S~\ref{sec:Introduction}, the number of files that should be handled by each node (\ie,  $D_i$ for $i=1,\dots,n$) will be a $\mathrm{Po}(1)$ random variable.  In this case, according to Strategy~II, two random nodes are chosen from all network nodes and the request is assigned to the node with lesser load.

Therefore, in terms of maximum load, this  problem is reduced to the standard power of two choices  model in the balanced allocations literature \cite{ABKU99}. In this model there are $n$ bins and $n$ sequential balls which are randomly allocated to bins. In every round each ball picks two random bins uniformly, and it is then allocated to the bin with lesser load \cite{ABKU99}. Then it is shown that the maximum load of network is $L=\max_i T_i=\log \log n(1+o(1))$ w.h.p. \cite{ABKU99}, which is an exponential improvement compared to Strategy I. 
\end{example}
  
However, in contrast to Example~\ref{ex:PO2C_M=K_r=infty}, in cache networks usually each node can store only a subset of files, and this makes the problem different from the standard balls and bins model, considered in \cite{ABKU99}. Here, due to the memory constraint at each node, the choices are much more limited than the $M=K$ case. In other words here we have the case of \emph{related choices}.
In the related choices scenario, the event of choosing the second choice is correlated with the first choice; this correlation may annihilate the effect of power of two choices as demonstrated in Example~\ref{ex:PO2C_K=n_M=cte_r=infty}.

\begin{example}[$K=n$, $M=\Theta(1)$, and $r=\infty$]\label{ex:PO2C_K=n_M=cte_r=infty}
In this regime,  there is a subset of the library, say $S$, with $|S|=\Theta(n)$, whose files are cached inside the network. On the other hand, each file type is requested $\mathrm{Po}(1)$ times and hence w.h.p. there will be a file in $S$ which is requested $\Theta(\log n/\log\log n)$ times (e.g., see \cite{Devroye85}). Since each file in $S$ is replicated at most $M$ times, requests for the file are distributed among at most $M$ nodes and thus
  the maximum load of the corresponding nodes will be at least $\Theta(\log n/\log\log n)/M$. Hence, due to memory limitation we cannot benefit from the power of two choices.
\end{example}

Although Example~\ref{ex:PO2C_K=n_M=cte_r=infty} shows that memory limitation can annihilate the power of two choices but this is not always the case. Example~\ref{ex:prop_M1} shows even for $M=1$ for some scenarios we can achieve $L=O(\log\log n)$.

\begin{example}[$K=n^{1-\epsilon}$ for every constant $0<\epsilon<1$, $M=1$, and $r=\infty$] \label{ex:prop_M1}
For any popularity distribution $\mc{P}$ where $\sum_{j=1}^{K}{(p_jn)^{-c}}=o(1)$, Strategy~II achieves maximum load $L=O(\log \log n)$ w.h.p. Also, notice that Uniform and Zipf distributions   satisfy this requirement,
whenever $\epsilon\in \left( \frac{\gamma-1}{\gamma}, 1 \right)$ for $\gamma\ge 1$, where $\gamma$ is the Zipf parameter.

Roughly speaking, when $M=1$, we may partition the servers based on their cached file and hence we have $K$ ``disjoint'' subsets of servers. Similarly there are $K$ request types where each request should be addressed by the corresponding subset of servers. Thus, here we have $K$ disjoint Balls and Bins sub-problems, and the sub-problem with maximum load determines the maximum load of the original setup. 
The reason that here, in contrast to Example~\ref{ex:PO2C_K=n_M=cte_r=infty}, we can benefit from power of two choices is the assumption of $K\ll n$.

\end{example}

\begin{proof}[Proof of Example~\ref{ex:prop_M1}]
	It is easy to see that for $M=1$, the number of caching servers with a specific file, say $W_j$ denoted by $S_j$, is distributed as a $\mathrm{Bin}(n, p_j)$. Thus applying a  Chernoff bound for $S_j$ (\eg, see Appendix~\ref{app:bounds}) implies that 
	\[
	\Pr{|S_j-\Ex{S_j}|\geq \Ex{S_j}/2}\le 2\exp({-p_j n/12}).
	\]
	Moreover, let $R_j$ denote the number of requests for file $W_j$, which is the sum of $n$ i.i.d. $\mathrm{Bin}(n, p_j)$ random variables. Again applying a Chernoff bound (\eg, see Appendix~\ref{app:bounds}) for Poisson random variables yields that
	\[
	\Pr{|R_j-\Ex{R_j}|\geq \Ex{R_j}/2}\le 2\exp({-p_j n/12}).
	\]
	Notice that $\Ex{S_j}=\Ex{R_j}=np_j$. 
	Suppose that  $\mathcal{A}_j$ denotes the event that $|S_j-\Ex{S_j}|\leq \Ex{S_j}/2$ and $|R_j-\Ex{R_j}|\leq \Ex{R_j}/2$. Then we have that $\Pr{\mathcal{A}_j}\ge 1-4\exp(-p_jn/12)$. Also define $\mathcal{E}_j$  to be  the event that the two-choice model with $S_j$ bins (caching servers) and $R_j$ balls (requests) achieves maximum load $R_j/S_j+\Theta(\log\log S_j)$. It is shown that this event happens with probability  $1-O(1/S_j^c)$, for every constant $c$ (e.g., see \cite{ABKU99}).
	So we have that 
	\begin{align*}
	\Pr{\mathcal{E}_j}&=\Pr{\mathcal{E}_j| \mathcal{A}_j}\Pr{\mathcal{A}_j}+
	\Pr{\mathcal{E}_j| \neg\mathcal{A}_j}\Pr{\neg\mathcal{A}_j}\\
	&>(1-2(p_j n)^{-c})(1-4\exp(-p_jn/12))\\
	&+ (1-2(p_j n)^{-c})(4\exp(-p_jn/12))\\
	&\ge 1-8(p_j n)^{-c}.
	\end{align*}
	Since we have $K$ disjoint subsystems, the union bound over all subsystems shows that the two choice model does achieve  the desired maximum load with probability 
	$1-8\sum_{j=1}^K(np_j)^{-c}=1-o(1)$ which concludes the proof due to example's assumption on popularity profile. 
	
	Now we show that the Uniform and Zipf distributions satisfy the example's assumption. When $\mathcal{P}$ is the Uniform distribution over $K$ files, $\forall j,\ p_j\cdot n=n^{\epsilon}$. Now by setting $c=3/\epsilon$, we have that 
	\[
	\sum_{j=1}^K(np_j)^{-c}=K({1}/{n^\epsilon})^c=K/n^3=o(1/n^2).
	\]
	
	Also, for Zipf distribution we have
	\[
	p_j=\frac{j^{-\gamma}}{\sum_{j=1}^K j^{-\gamma}}=\frac{j^{-\gamma}}{\Lambda(\gamma)}.
	\]
	Depending on $\gamma$,  we consider two cases:
	\begin{itemize}
		\item $\gamma\ge 1$: For every $c>1$ we have
		\[
		\left(\frac{\Lambda(\gamma)}{n}\right)^c\Lambda(\gamma c)=\Theta\left(\frac{\log^c K}{n^c}\right)\Lambda(\gamma c)=o(1),
		\]
		where we used $K<n$ and \eqref{zipfestimate}.
		\item $0< \beta<1$: By setting 	 $c=2/ \gamma$ and using the fact that $K<n$, we have
		\begin{align*}
		\left(\frac{\Lambda(\gamma)}{n}\right)^c
		\Lambda(\gamma c)=&\Theta\left( \frac{K^{(1-\gamma)c}}{n^c}\right)
		\le {n^{(1-\gamma)c-c}}\Lambda(\gamma c)\\
		=&n^{-\gamma c}\Lambda(\gamma c)=o(1),
		\end{align*}
		where we applied \eqref{zipfestimate}.
	\end{itemize}
\end{proof}

Above examples bring to attention the following question.

\begin{question}\label{que:MemoryLimitRegimes}
In view of the memory limitation at each server in cache networks, what are the regimes (in terms of problem parameters) one can benefit from the power of two choices to balance out the load?
\end{question}

Addressing Question~\ref{que:MemoryLimitRegimes}, for the general $M>1$ case, is more challenging than Example~\ref{ex:prop_M1} and needs a completely different approach.
The simplicity of case $M=1$ is that there is no interaction between $K$ Balls and Bins sub-problems. On the other hand, consider $M>1$. If a request, say $W_j$, should be allocated to a server then the load of two candidate bins that have cached $W_j$ should be compared. However, load of other file types will also be accounted for in this comparison. So there is flow of load information between different sub-problems which makes them entangled. 

In all above examples, we have not considered the proximity constraint, \ie, $r=\infty$, yet. This results in a fairly high communication cost $C=\Theta\left(\sqrt{n}\right)$. However, in general since parameter $r$ controls the communication cost, it can be chosen to be much less than the network diameter, \ie, $\Theta(\sqrt{n})$. This proximity awareness introduces another source of correlation (other than the memory limitation) between the two choices. Thus, considering the proximity constraint may annihilate the power of two choices even in large memory cases as demonstrated in the following example.

\begin{example}[$M=K$ and $r=1$]
In this example, when a request arrives at a server, the server chooses two random choices among itself and its neighbours. Then the request is allocated to the one with lesser load. Since there exists a server at which $\max_i D_i=\Theta(\log n/\log\log n)$ requests arrive, maximum load of network (\ie, $L=\max_i T_i$) will be at least $\Theta(\log n/\log\log n)/5$.
\end{example}

Thus, similar to Question~\ref{que:MemoryLimitRegimes} regarding the memory limitation effect, one can pose the following question regarding proximity principle.

\begin{question}\label{que:ProximityConstraint}
In view of the proximity constraint of Scheme~II, what are the regimes (in terms of problem parameters) one can benefit from the power of two choices to balance out the load?
\end{question}

In order to completely analyze load balancing performance of Scheme~II, one should consider both sources of correlation simultaneously (which is not the case in above examples). To this end, in the following, we investigate two memory regimes, namely $M=K$ and $M=n^{\alpha}$ for some $0<\alpha<1/2$.

Our main result for $M=n^{\alpha}$ is presented in the following theorem.

\begin{thm}\label{main:twochoice}
Suppose that $0< \alpha, \beta<1/2$ be two constants and let $K=n$, $M=n^{\alpha}$, and $r=n^{\beta}$. Then if 
\[
\alpha+2\beta\ge 1+2\log\log n /\log n,
\]
under the Uniform popularity distribution,  Strategy II achieves maximum load $L=\Theta(\log \log n)$ and communication cost $C=\Theta(r)$ w.h.p.
\end{thm}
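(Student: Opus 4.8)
My plan is to condition throughout on a high-probability \emph{goodness} event $\mathcal{G}$ for the random cache placement, and then dispatch the communication cost and the two directions of the load bound in turn. Write $D := \Theta(Mr^2/n) = \Theta(n^{\alpha+2\beta-1})$ and let $\mathcal{G}$ be the event that (i) for every node $u$ and every file $W_j$ the candidate set $C_j(u):=\{v\in B_r(u): v \text{ has cached } W_j\}$ satisfies $|C_j(u)|=\Theta(D)$, and (ii) every node caches $\Theta(M)$ distinct files. Since $\Ex{|C_j(u)|}=|B_r(u)|\bigl(1-(1-1/K)^M\bigr)=\Theta(Mr^2/n)=D$ and the placements at distinct nodes are independent, a Chernoff bound followed by a union bound over the $n$ nodes and $K=n$ files gives $\Pr{\mathcal{G}}=1-o(1)$; this is precisely where the hypothesis $\alpha+2\beta\ge 1+2\log\log n/\log n$ is used, as it forces $D\ge(\log n)^2$ and hence makes the deviation probabilities summable. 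Part (ii) is Lemma~\ref{lem:GoodnessProperty}. The communication cost is then immediate in order of magnitude: every request is served from within $B_r(u)$, so $C\le r$, while a constant fraction of the $\Theta(D)$ copies of any requested file lie in the annulus $B_r(u)\setminus B_{r/2}(u)$, so with constant probability both uniform candidates sit at distance $\Theta(r)$ from $u$, giving $C=\Theta(r)$.

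The heart of the theorem is the upper bound $L=O(\log\log n)$, which I would prove by a layered induction on $\beta_k:=|\{v: T_v\ge k\}|$, the number of servers of final load at least $k$, following \cite{ABKU99} and the graph-based refinement of \cite{KP06}. The target is the doubly-exponential recursion $\beta_{k+1}\lesssim \beta_k^2/n$, that is $p_{k+1}\lesssim p_k^2$ with $p_k:=\beta_k/n$, which drives $\beta_k$ to zero at $k=\log\log n+O(1)$. A request reaches height $\ge k+1$ only if both its candidates already have load $\ge k$; conditioned on $\mathcal{G}$ the two candidates are uniform in $C_j(u)$, so for a request born at $u$ for $W_j$ this probability is at most $\bigl(|H\cap C_j(u)|/|C_j(u)|\bigr)^2$, where $H$ is the current set of servers of load $\ge k$. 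Summing over the $n$ requests, the recursion follows as long as the \emph{local} heavy-fraction $|H\cap C_j(u)|/|C_j(u)|$, averaged over the uniformly random origin $u$ and type $j$, stays within a $(1+o(1))$ factor of the \emph{global} fraction $|H|/n$; the constant-height base case (bounding $\beta_{k_0}$ by a small constant fraction of $n$ for a suitable $k_0$) is handled by a first-moment estimate, using part (ii) of $\mathcal{G}$, which makes each server a candidate for only $\Theta(M)$ types and hence of expected load $O(1)$.

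The main obstacle, and the entire difficulty beyond \cite{KP06}, is bounding that local heavy-fraction, since two correlations can inflate it above the global fraction: the \emph{proximity} correlation, whereby heavy servers cluster geometrically so that requests born near a cluster see too many of them, and the \emph{memory} correlation, whereby a single heavy server is a candidate for $\Theta(M)$ distinct files and thereby couples the $K=n$ otherwise-separate sub-problems. Both are tamed by the size of $D$: each candidate set has $\Theta(D)$ elements with $D\ge(\log n)^2$, spread across $B_r(u)$, and the translation symmetry of the torus makes the average local fraction over a uniform origin reproduce the global fraction with fluctuations suppressed polynomially in $D$ — this is exactly the role of the density threshold, the analogue of the $n^{\Omega(\log\log n/\log n)}$ degree requirement of \cite{KP06}. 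The delicate point is that $H=H_k$ is itself produced by the process and so resists a naive union bound over fixed sets; the standard remedy is a witness-tree (bounded-history) argument exposing the choices in an order that renders the per-layer increments conditionally independent, so that a Chernoff bound applies level by level. Finally, the matching lower bound $L=\Omega(\log\log n)$ is the easier direction: Strategy~II cannot beat the idealized power of two choices, and restricting to a single file type on a region where the candidate sets act like independent uniform pairs, the process dominates a standard two-choice allocation of $\Theta(n)$ balls, whose maximum load is $\Omega(\log\log n)$ as in Example~\ref{ex:PO2C_M=K_r=infty}.
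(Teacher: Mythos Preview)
Your approach is correct in spirit but takes a genuinely different route from the paper. The paper does \emph{not} rerun the layered induction; instead it reduces Strategy~II to the setting of \cite{KP06} via a \emph{configuration graph} $H$: two servers are joined in $H$ iff they share at least one cached file and lie within torus distance $2r$. Lemma~\ref{lem:GoodnessProperty} establishes a $(\delta,\mu)$-goodness property --- each node caches $\Theta(M)$ distinct files and, crucially, \emph{any pair} of nodes shares at most $\mu=O(1)$ files. Lemma~\ref{lem:Gp} then shows that (a) $H$ is almost $\Delta$-regular with $\Delta=\Theta(M^2r^2/K)$, and (b) Strategy~II samples an edge of $H$ with probability $O(1/e(H))$. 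The black box of Theorem~\ref{thm:related} finishes the load bound in one line.

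The key structural insight you are missing is the bounded pairwise overlap $t(u,v)<\mu$. This is exactly what converts the two correlated choices into a near-uniform edge sample in $H$: if two servers could share many files, a single edge of $H$ would be picked disproportionately often and the reduction would fail. Your sketch handles the ``memory correlation'' only by appealing to the size of $D=\Theta(Mr^2/n)$, which controls the \emph{size} of each candidate set but not the \emph{multiplicity} with which a fixed pair of servers appears across file types; without the $t(u,v)=O(1)$ bound your averaging step ``local heavy-fraction $\approx$ global heavy-fraction'' is not justified, since you actually need the \emph{square} of the local fraction to average correctly and Jensen goes the wrong way. Your witness-tree remedy could in principle be made to work, but it would essentially reprove \cite{KP06} inside this setting. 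The paper's route buys modularity and a short proof; yours, if completed, would be self-contained but substantially longer, and would still need the overlap bound (or an equivalent) to close the argument.
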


\begin{remark}
To have a more accessible proof, in Theorem~\ref{main:twochoice}, we have assumed that $K=n$.
Note that the proof techniques can also be extended to the case where $K=O(n)$.
\end{remark}

In order to prove the theorem, let us first present an interesting result that was shown in \cite{KP06} as follows.
\begin{thm}[\cite{KP06}]\label{thm:related}
	Given an almost $\Delta$-regular graph\footnote{A graph is said to be almost $\Delta$-regular, if each vertex has degree $\Theta(\Delta)$.} $G$ with $e(G)$ edges and $n$ nodes
	representing $n$ bins, if $n$ balls are thrown into the bins
	by choosing a random edge with probability at most $O(1/e(G))$   and placing into the smaller
	of the two bins connected by the edge, then the maximum
	load is $\Theta(\log \log n) + O\left( \frac{\log n}{\log(\Delta/\log^4 n )} \right) + O(1)$ w.h.p.
\end{thm}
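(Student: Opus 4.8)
The plan is to prove this with the layered-induction / witness-tree machinery of Azar, Broder, Karlin and Upfal, modified to absorb the correlation created by selecting a graph \emph{edge} instead of two independent bins. Writing $B_i$ for the (random) set of bins whose final load is at least $i$, the governing observation is that a ball can raise a bin to height $i+1$ only when the edge it selects has \emph{both} endpoints already at height $\ge i$, i.e.\ lies inside $B_i$. Conditioned on a deterministic bound $|B_i|\le\beta_i$, the number of balls placed at height $\ge i+1$ is therefore stochastically dominated by $\mathrm{Bin}\!\left(n,\,p_i\right)$ with $p_i=O\!\left(e(B_i)/e(G)\right)$, where $e(S)$ denotes the number of edges induced by $S$. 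The whole argument reduces to controlling $e(S)$ uniformly over all $S$ of a given size, and the two additive terms in the bound are exactly the two regimes of this edge count.

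The $\Theta(\log\log n)$ term comes from the regime where the high-load set is small. Once $|B_i|=s\lesssim\Delta$, the induced subgraph is sparse and the sharp bound $e(S)\le\binom{s}{2}$ is available, giving $p_i=O\!\left(s^2/(n\Delta)\right)$ and the recursion $s_{i+1}=O\!\left(s_i^2/\Delta\right)$. Setting $t_i:=s_i/\Delta$ this reads $t_{i+1}=O(t_i^2)$, the genuinely quadratic (doubly-exponential) decay of the classical power-of-two-choices analysis; I would run it from the threshold down to $s<1$, which takes $O(\log\log n)$ levels and contributes the $\Theta(\log\log n)$ summand. Establishing the induction rigorously requires fixing the sequence $\{\beta_i\}$ in advance and, at each level, using a Chernoff bound to show $\Pr{|B_{i+1}|>\beta_{i+1}}=o(1/n^2)$ so that a union bound over the $O(\log\log n)$ levels succeeds; keeping the conditional expectations $n p_i$ large enough for this concentration is what forces the $\log^4 n$ slack that appears inside the logarithm.

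The $O\!\big(\log n/\log(\Delta/\log^4 n)\big)$ term is the cost of the complementary regime, where $|B_i|\gtrsim\Delta$ and the only available estimate is the degree bound $e(S)\le\tfrac12\sum_{v\in S}\deg(v)=O(\Delta s)$; here the two-choice advantage is muted because a random edge frequently lies inside the dense high-load set. I would handle this phase through the dual, witness-tree picture: a bin of load $\ge k$ forces a depth-$k$ binary activation tree in which each node is a ball whose chosen edge is incident to its parent's bin, so the tree is confined to a radius-$k$ neighborhood containing only $O(\Delta^k)$ vertices. Summing $\Pr{\text{tree activated}}$ over all such trees --- with $n$ choices for the root, $O(\Delta)$ ways to extend at each node, an activation factor $O(1/e(G))$ per node, and the distinctness constraint imposed by the small neighborhoods --- caps the attainable depth and yields the single-exponential residual $\log n/\log(\Delta/\log^4 n)$, after which the set has shrunk below $\Theta(\Delta)$ and the quadratic phase takes over.

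I expect the main obstacle to be the edge-counting step under \emph{no} expansion hypothesis: since $B_i$ is itself shaped by the allocation, the bound on $e(B_i)$ must hold simultaneously for \emph{every} set it could be, which forces the worst-case estimates $O(\Delta s)$ and $\binom{s}{2}$ rather than the expander-mixing value $e(G)(s/n)^2$ --- and it is precisely this worst case that degrades the pure $\log\log n$ guarantee by the additive degree term. A secondary difficulty is the seam between the two phases: I must verify that the witness-tree / degree-limited phase really drives $|B_i|$ below the crossover threshold $\Theta(\Delta/\log^4 n)$ (whence the denominator) before the quadratic recursion is valid, and that the leaves of each witness tree can be taken distinct so that their activation factors multiply. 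Controlling these two points gives the claimed maximum load $\Theta(\log\log n)+O\!\big(\log n/\log(\Delta/\log^4 n)\big)+O(1)$ with high probability.
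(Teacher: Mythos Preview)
The paper does not prove this theorem: it cites \cite{KP06} and remarks that the slight generalization from uniform edge selection to $O(1/e(G))$-bounded selection ``follows the original proof's idea with some modifications in computation parts,'' which are omitted for space. There is thus no proof in the paper to compare your proposal against.

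That said, your outline is a faithful reconstruction of the Kenthapadi--Panigrahi argument at the structural level: the two additive terms in the bound correspond exactly to the two edge-count regimes you identify, namely $e(S)=O(\Delta|S|)$ when $|S|\gtrsim\Delta$ (yielding geometric shrinkage and the $\log n/\log(\Delta/\log^4 n)$ term) and $e(S)\le\binom{|S|}{2}$ once $|S|\lesssim\Delta$ (yielding the quadratic recursion and the $\log\log n$ term). One difference worth noting: the original \cite{KP06} proof runs the \emph{same} layered-induction/Chernoff step through both regimes, merely switching the edge-count estimate at the crossover, rather than invoking a separate witness-tree argument for the dense phase as you propose. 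Staying with layered induction throughout removes the ``seam'' difficulty you flag (distinctness of witness-tree leaves, matching the crossover threshold), since the induction hypothesis $|B_i|\le\beta_i$ is carried uniformly and the transition is just a change in the bound plugged into $p_i$. Your mixed approach can be made to work, but the uniform layered induction is cleaner here.
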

\begin{remark}
Note that in the original theorem presented in \cite{KP06}, it is assumed that each edge is chosen uniformly among all edges of graph $G$. However, here we slightly generalize the result so that each edge is chosen with probability at most $O(1/e(G))$. The proof follows the original proof's idea with some  modifications in computation parts, where due to lack of space we omit.
\end{remark}
In order to apply Theorem~\ref{thm:related}, we first need to define a new graph $H$ as follows.
\begin{definition}[Configuration Graph]
For  the given parameter $r$, configuration graph $H$ is defined as a graph	whose nodes represent  the servers and two nodes, say $u$ and $v$, are connected if and only if  $u$ and $v$ have cached a common file and $d(u, v)\le 2r$ in the torus.
\end{definition}

For every two servers $u$ and $v$,
let $T(u, v)$ be the set of distinct files that have been cached in both nodes $u$ and $v$. Also denote $|T(u, v)|$ by $t(u, v)$. Define $t(u)$ to be the number of distinct  cached files  in $u$. Now, let us define \emph{goodness} of a placement strategy as follows.
\begin{definition}[Goodness Property]
For every positive constant $\delta\in[0, 1]$ and $\mu=O(1)$, we say the  file placement strategy is \emph{$(\delta, \mu)$-good}, if for every $u$ and $v$, $t(u)\ge \delta M$ and $t(u, v)<\mu$.
\end{definition}

\begin{lemma}\label{lem:GoodnessProperty}
The proportional cache placement strategy introduced in \S\ref{sec:Notation_ProblemSetting}, is $(\delta, \mu)$-good w.h.p.
\end{lemma}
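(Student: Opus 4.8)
The plan is to verify the two defining conditions of $(\delta,\mu)$-goodness separately, both by first-moment/union-bound arguments, working in the regime relevant to Theorem~\ref{main:twochoice}, namely $K=n$, $M=n^\alpha$ with $\alpha<1/2$ under the Uniform distribution (where $S_u$ denotes the set of distinct files cached at $u$). The single fact driving both parts is that $M^2/K=n^{2\alpha-1}=o(1)$, which holds precisely because $\alpha<1/2$: this makes self-collisions within one cache negligible (yielding $t(u)\ge\delta M$) and pairwise overlaps between two caches tiny (yielding $t(u,v)<\mu$). The only subtlety is making each per-node (resp.\ per-pair) tail bound decay fast enough to survive a union bound over the $n$ nodes (resp.\ the $\binom{n}{2}$ pairs).

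For the lower bound on $t(u)$, fix a node $u$. Since its $M$ files are drawn uniformly with replacement from $K=n$ files, the event $\{t(u)\le \delta M\}$ says that all $M$ draws land in some subset of at most $\delta M$ distinct files. A union bound over all such subsets gives $\Pr{t(u)\le \delta M}\le \binom{K}{\delta M}\,(\delta M/K)^{M}\le e^{\delta M}(\delta M/K)^{(1-\delta)M}$, where I used $\binom{K}{\delta M}\le (eK/(\delta M))^{\delta M}$. Since $\delta M/K=\delta n^{\alpha-1}$, this bound is of order $e^{-\Theta(M\log n)}$, i.e.\ super-polynomially small, so a union bound over all $n$ nodes still leaves $o(1)$. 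Hence $t(u)\ge \delta M$ simultaneously for every $u$ w.h.p., for any fixed constant $\delta\in(0,1)$.

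For the upper bound on $t(u,v)$, fix a pair $u,v$ and a candidate common set $F$ of exactly $\mu$ files. Choosing an injection of $F$ into the $M$ draw positions of $u$ bounds $\Pr{F\subseteq S_u}\le M^{\mu}(1/K)^{\mu}=(M/K)^{\mu}$, and likewise for $v$. By independence of the two caches, $\Pr{F\subseteq S_u\cap S_v}\le (M/K)^{2\mu}$, and a union bound over the $\binom{K}{\mu}$ choices of $F$ yields $\Pr{t(u,v)\ge \mu}\le \tfrac{1}{\mu!}(M^2/K)^{\mu}=\tfrac{1}{\mu!}\,n^{(2\alpha-1)\mu}$. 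Because $2\alpha-1<0$ is a fixed negative constant, I pick the constant $\mu=\mu(\alpha)$ large enough that $(2\alpha-1)\mu<-2$; then a union bound over all $\binom{n}{2}$ pairs gives $\Pr{\exists\,(u,v):\ t(u,v)\ge \mu}=o(1)$, establishing $t(u,v)<\mu$ for all pairs w.h.p.

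Combining the two bounds shows the placement is $(\delta,\mu)$-good w.h.p. The main obstacle is not any single estimate---each is an elementary first-moment computation---but matching the decay rates to the union-bound sizes: for $t(u)$ one must use the combinatorial subset bound rather than a Markov bound on the collision count (Markov yields only $o(1)/M$ per node, too weak against $n$ nodes), while for $t(u,v)$ one must exhibit $\mu$ as an \emph{explicit} constant depending on $\alpha$ so that the polynomially small per-pair probability beats the $\Theta(n^2)$ pairs. Both arguments hinge on the regime $\alpha<1/2$.
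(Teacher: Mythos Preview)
Your proof is correct and structurally identical to the paper's: both split into the $t(u)\ge\delta M$ and $t(u,v)<\mu$ parts and handle each by a first-moment/union-bound argument, with the key quantity $M^2/K=n^{2\alpha-1}$ driving the overlap estimate and essentially the same choice $\mu=\Theta\bigl(1/(1-2\alpha)\bigr)$. The difference is purely in how the tail probabilities are estimated: the paper writes down an exact stars-and-bars formula for the multiset of draws, namely $\Pr{t(u)=s}=\binom{K}{s}\binom{M-1}{M-s}\big/\binom{K+M-1}{M}$, and an analogous ratio for $\Pr{t(u,v)\ge t}$, and then bounds these; you instead use direct combinatorial union bounds (over subsets of size $\delta M$ for $t(u)$, and over injections of $F$ into the draw positions for $t(u,v)$). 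Your route is slightly more elementary and more transparently tied to the i.i.d.\ sampling model. One small point to tighten: the paper's ``w.h.p.'' means failure probability $O(n^{-c})$ for some $c>0$, so in the $t(u,v)$ part you should take $\mu$ a bit larger than $2/(1-2\alpha)$ (the paper takes $\mu\ge 5/(1-2\alpha)$) so that the per-pair bound beats the $\binom{n}{2}$ pairs with a polynomial margin rather than merely $o(1)$.
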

\begin{proof}
	Clearly, every set of cached files in every node (with replacement) can be one-to-one mapped to a non-negative integral solution of equation $\sum_{i=1}^Kx_i=M$, where each $x_i$ expresses the number of times  that file $i$  has been cached in the node. A combinatorial argument shows that, the equation has ${K+M-1 \choose M}$ non-negative integer solutions. So for each $1\le s\le  M$, we have
	\begin{equation}\label{eq:Prob_c(u)=s}
	\Pr{t(u)=s}=\frac{{ K \choose s}{M-1 \choose M-s}}{{ K+M-1 \choose M}},
	\end{equation}
	where we first fixed a set of file indexes   of size $s$, say $I=\{i_1, i_2,\ldots, i_s\}$, and then count the number of integral solutions to the equation $\sum_{i\in I}x_i=M-s$.

In order to bound \eqref{eq:Prob_c(u)=s}, we note that for every $1\le a\le b,$ $(b/a)^a \le  {b \choose a} \le b^a$ and also ${b \choose a}\le 2^b$.  Recall that we assumed  $K=n$ and $M=n^\alpha$, $0<\alpha<1/2$. Hence for every $1\leq s \leq \delta M$, we have
	\begin{align*}
	\Pr{t(u)=s}&\le \frac{K^s2^{M}}{{K\choose M}}\le 
	\frac{K ^s2^{M}}{(K/M)^M}=
	{(2M)^M}{K^{s-M}}\\
	&
	\leq  (2n^\alpha n^{\delta-1})^M.
	\end{align*}
 Thus, by choosing $\delta=(1-\alpha)/3$, for every \mbox{$1\le s\le \delta M $}, we have
	\begin{align*}
	\Pr{t(u)=s}&\le (2n^{\alpha+\delta-1})^M=
	(2n^{2\alpha/3-2/3})^M\\
	& \le (2n^{-1/3})^M=n^{-\omega(1)},
	\end{align*}
	where the last equality follows from $M=n^{\alpha}=\omega(1)$.
	Now the union bound over all $1\le s\le \delta M$ and $n$ nodes yields 
	\begin{align}\label{prob:size}
	\Pr{\exists u\in V: t(u) \le \delta M}=n^{-\omega(1)}.
	\end{align}
	By a similar argument, for each $1\le t\le M$ and every $u$ and $v$, we have
	\begin{align*}
	\Pr{t(u, v)\geq t}={K\choose t}\left(\frac{{K+M-t-1 \choose M-t}}{{K+M-1 \choose M}}\right)^2.
	\end{align*}
	Thus,  for any constant $\mu \ge 5/(1-2\alpha)$, we can write
	\begin{align*}
	&\Pr{t(u, v)\ge \mu}\\
	&\le K^\mu \left(\frac{(K+M-\mu-1)!M!}{(K+M-1)!(M-\mu)!}\right)^2\\
	&\le K^\mu \left(\frac{M^\mu}{K^\mu}\right)^2\le \frac{M^{2\mu}}{K^\mu}= n^{(2\alpha-1)\mu}=O(1/n^5).
	\end{align*}
 By applying the union bound over all pairs of servers, for every $u$ and $v$ we have 
	\begin{align}\label{prob:com}
	\Pr{t(u, v)\ge  \mu}=O(1/n^3).
	\end{align}
	Hence, $t(u, v) < \mu$ w.h.p.
Putting (\ref{prob:size}) and (\ref{prob:com}) together  concludes the proof.
\end{proof}

The following lemma presents some useful properties of $H$ and Strategy II.
\begin{lemma}\label{lem:Gp}
	Conditioning on ``goodness'' of the file placement and assuming $K=n$, $M=n^\alpha$ and $r=n^{\beta}$ with $\alpha+2\beta\ge 1+2\log\log n/\log n$, we have
	\begin{itemize}
		\item[(a)] W.h.p. $H$ is almost $\Delta$-regular with $\Delta=\Theta\left(\frac{M^2r^2}{K}\right)$.
		\item[(b)] For each request, Strategy II samples an edge of $H$ (two servers) with probability $O(1/e(H))$.
	\end{itemize}
\end{lemma}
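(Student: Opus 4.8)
The plan is to prove the two parts of Lemma~\ref{lem:Gp} separately, using the goodness property from Lemma~\ref{lem:GoodnessProperty} throughout. For part~(a), I would first fix a node $u$ and count the expected number of its neighbors in $H$. By definition, a node $v$ with $d(u,v)\le 2r$ is a neighbor iff $u$ and $v$ share a cached file. The number of candidate nodes in $B_{2r}(u)$ is $\Theta(r^2)$, and for each such $v$ the probability that $v$ shares at least one file with $u$ is, conditioned on goodness so that $t(u)=\Theta(M)$, approximately $1-(1-t(u)/K)^M = \Theta(M^2/K)$, using the standard approximation $1-(1-x)^M\approx Mx$ valid when $Mt(u)/K=o(1)$. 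Multiplying gives expected degree $\Theta(M^2 r^2/K)=\Theta(\Delta)$. The condition $\alpha+2\beta\ge 1+2\log\log n/\log n$ ensures $\Delta = M^2r^2/K = n^{2\alpha+2\beta-1}\ge \log^2 n$, so $\Delta$ is large enough for concentration.

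The main work in part~(a) is the concentration argument: I must show every node's degree is $\Theta(\Delta)$ w.h.p., not just in expectation. I would write $d_H(u)=\sum_{v\in B_{2r}(u)} \1[u,v \text{ share a file}]$. The indicator variables for distinct $v$ are independent of each other (cache placements at different nodes are independent), once we condition on $u$'s own placement, so a Chernoff bound (from Appendix~\ref{app:bounds}) applies directly to bound the deviation of $d_H(u)$ from its mean by a constant factor with probability $1-n^{-\omega(1)}$, provided $\Delta=\Omega(\log n)$, which holds. A union bound over all $n$ nodes then gives that $H$ is almost $\Delta$-regular w.h.p. I expect this Chernoff-plus-union-bound step to be the main obstacle, mostly in verifying the approximations for $\Pr{u,v\text{ share a file}}$ hold uniformly and that $\Delta$ comfortably exceeds $\log n$ under the stated parameter constraint.

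For part~(b), the goal is to show Strategy~II effectively samples an edge of $H$ with probability $O(1/e(H))$, so that Theorem~\ref{thm:related} applies. I would argue as follows: a request is born at a uniformly random node $u$ (probability $1/n$) and then two distinct servers are chosen uniformly at random from those nodes in $B_r(u)$ that have cached the requested file; the request is routed to the less loaded of the two, which corresponds to selecting the edge between these two servers in $H$ (note both lie within distance $2r$ of each other via $u$, and they share the requested file, so the edge exists in $H$). The probability of selecting a particular edge $\{a,b\}$ is thus controlled by the number of origin nodes $u$ that could generate it times the per-origin selection probability. Since $e(H)=\tfrac{1}{2}\sum_u d_H(u)=\Theta(n\Delta)$ by part~(a), I would bound the maximum edge-selection probability and verify it is $O(1/e(H))=O(1/(n\Delta))$.

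The delicate point in part~(b) is handling the conditioning on the requested file type and the number of valid candidate servers in $B_r(u)$. I would use goodness ($t(u,v)<\mu$ and $t(u)\ge\delta M$) to guarantee that the number of servers in $B_r(u)$ holding a given file is concentrated around its mean $\Theta(M r^2/K)$, which is $\omega(1)$ under the parameter regime, so that the two-out-of-many selection is well-defined and the per-edge probability is suitably small and uniform. Combining the origin probability $1/n$, the file-type averaging, and the pairwise selection probability should yield the $O(1/e(H))$ bound, matching the hypothesis of the generalized Theorem~\ref{thm:related}. I anticipate the bookkeeping of these conditional probabilities, rather than any single inequality, to be where the argument requires the most care.
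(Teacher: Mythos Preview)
Your proposal is correct and follows essentially the same approach as the paper: for part~(a) you condition on $t(u)=\Theta(M)$ via goodness, compute the per-node sharing probability $1-(1-t(u)/K)^M=\Theta(M^2/K)$, recognize $d_H(u)$ as a sum of independent indicators over $B_{2r}(u)$, and apply Chernoff plus a union bound; for part~(b) you bound the per-edge selection probability by summing over origin nodes (the paper's set $S_{u,v}$ of size $O(r^2)$), file types in $T(u,v)$, and the $\binom{F_j(w)}{2}^{-1}$ factor, exactly as the paper does. One small clarification: the concentration of $F_j(u)$ around $\Theta(Mr^2/K)$ is a separate Chernoff argument (independent placements), not a consequence of goodness; the role of goodness in part~(b) is specifically to cap $t(u,v)<\mu$, which bounds the number of file types that can produce a given edge and is what keeps the edge probability uniform at $O(1/e(H))$.
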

\begin{proof}
	Consider an arbitrary node $u$ with $s$ distinct files. Then by definition of $H$, for every node $v$  we have 
	\begin{align*}
	p_s:=\Pr{t(u, v)\ge 1| t(u)=s }&=1-\left(\frac{K-s}{K}\right)^M\\
	&=\frac{sM}{K}(1+o(1)),
	\end{align*}
	where $1\le s\le M$.
	On the other hand $u$ and $v$ are connected in $H$, if in addition we have $d_G(u, v)\leq 2r$.
	Therefore for every given node $u$ with $s$ distinct cached files, $d(u)$ in $H$ (degree of $u$ in $H$) has a binomial distribution $\mathrm{Bin}(b_{2r}(u), p_s)$, where  $b_{2r}(u)=|B_{2r}(u)|$. Hence applying a Chernoff bound implies that with probability $1-n^{-\omega(1)}$, we have 
	\[
	d(u)=\frac{sMb_{2r}(u)}{K}(1+o(1)).
	\] 

	Conditioning on the goodness of file placement, $s=t(u)=\Theta(M)$.
	Also  by symmetry of torus, we have $b_{2r}(u)=\Theta(r^2)$, for every $u$. So, with high probability for every $u$, we have
	\[
	d(u)=\Theta\left({M^2r^2}/{K}\right),
	\]
	where this concludes the proof of part (a).

	Now it remains to show that Strategy~II picks an edge of $H$, with probability $O(1/e(H))$. First, notice that
	\begin{align}\label{eq:edgesize} e(H)=\Theta\left({nM^2r^2}/{K}\right)=\Theta(M^2r^2),
	\end{align}
    as $K=n$. 
	Then recall that each file is cached in every node with probability 
	$p=1-(1-1/K)^M=M/K(1+o(1))$, independently. 
	For any given node $u$ and file $W_j$, let $F_j(u)$ be the number of nodes at distance at most $r$ that have cached file $W_j$. Then $F_j(u)$ has a binomial distribution $\mathrm{Bin}(b_r(u), p)$, where $b_r(u)=|B_r(u)|$. So 
	\[
	\Ex{F_j(u)}=b_r(u)\cdot p=\Theta(r^2M/K),
	\] 
	where $b_r(u)=\Theta(r^2)$ for every $u$.
	Since \mbox{$\alpha+2\beta\ge 1+2\log\log n/\log n$}, we have $\Ex{F_j(u)}=\omega(\log n)$, for every  $u$ and $j$.
	Now, applying a Chernoff bound for $F_j(u)$ implies that with probability $1-n^{-\omega(1)}$, $F_j(u)$ concentrates around its mean and hence, w.h.p., we have for every $u$ and $j$
	 \[	
	F_j(u)=\Theta(r^2M/K)=\Theta(r^2M/n).
	\] 
	
Consider an edge $(u, v)\in E(H)$, with $t(u, v)=t$. Define $S_{u,v}$ to be  the set  of nodes that may pick pair $u$ and $v$ randomly in Strategy~II. It is not hard to see that $|S_{u, v}|=O(r^2)$. Now we have,	
	\begin{align}
	&\Pr{ (u, v)\in E(H) \text{ is picked by Strategy II}| t(u, v)=t} \nonumber\\
	&\quad\quad =\sum_{j\in T(u, v)}\frac{1}{K}\sum_{w\in S_{u, v}}\frac{1}{n}\frac{1}{{F_j(w)\choose 2}} \nonumber\\
	&\quad\quad =\frac{1}{n^2}
	\sum_{j\in T(u, v)}\sum_{w\in S_{u, v}}\frac{1}{{F_j(w)\choose 2}} \nonumber\\
	&\quad\quad =\frac{1}{n^2}
	\sum_{j\in T(u, v)}\sum_{w\in S_{u, v}}
	\Theta({n^2}/{r^4M^2}). \label{eq:CondPr_(u,v)_picked}
	\end{align}
Conditioned on ``goodness,'' we have  for every $(u, v) \in E(H)$, $1\le t(u, v)< \mu$. So  \eqref{eq:CondPr_(u,v)_picked} can be simplified as 
\begin{align*}
&\Pr{ (u, v)\in E(H) \text{ is picked by Strategy II}}\\
&\quad\quad\le \Theta({ \mu |S_{u, v}|}/{r^4M^2})\\
&\quad\quad =O(1/r^2M^2)=O(1/e(H)),
	\end{align*}
	where the last equality follows from \eqref{eq:edgesize}.
\end{proof}

\begin{proof}[Proof of Theorem \ref{main:twochoice}]
	Applying Lemma \ref{lem:Gp} shows that w.h.p. the configuration graph $H$ is an almost $\Delta$-regular graph where $\Delta=M^2r^2/n$. Moreover, in each step, every edge of $H$ is chosen randomly with probability $O(1/e(H))$. 
	Hence, we can apply Theorem~\ref{thm:related} and conclude that w.h.p. the maximum load is at most
	\[
\Theta(\log\log n)+ O\left( \frac{\log n}{\log(\Delta/\log^4 n )} \right) =\Theta(\log\log n)+ O(1),
	\]
where it follows because $\alpha+2\beta\ge 1+2\log\log n/\log n$ and hence $\Delta=M^2r^2/n=n^{2\alpha+2\beta-1}>n^{\alpha}$.
\end{proof}

Now let us present our next result regarding to the $M=K$ regime.

\begin{thm}\label{thm:full}
 Suppose $M=K$ and Uniform distribution $\mathcal{P}$ over the file library. Then Strategy II achieves the maximum load $L=\Theta\left(\log \log n\right)$ and communication cost $C=\Theta\left(n^{\beta}\right)$ for any  $\beta=\Omega(\log\log n/\log n)$.
 \end{thm}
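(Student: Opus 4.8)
The plan is to mirror the proof of Theorem~\ref{main:twochoice}: construct a configuration graph $H$, verify that it meets the hypotheses of Theorem~\ref{thm:related}, and then read off the maximum load. The essential difference from the $M=n^{\alpha}$ regime is that with $M=K$ each file is cached at a given node with a \emph{constant} probability $p=1-(1-1/K)^{K}=\Theta(1)$ (tending to $1-1/\mathrm{e}$), independently across nodes and files. Consequently the memory-induced correlation essentially disappears and only the proximity constraint survives, so $H$ degenerates into (a subgraph of) the torus power graph in which $u\sim v$ iff $d_G(u,v)\le 2r$.

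First I would establish the analogue of Lemma~\ref{lem:GoodnessProperty}: w.h.p.\ every node caches $t(u)=\Theta(K)$ distinct files and any two nodes share $t(u,v)=\Theta(K)$ common files. Since $\Theta(K)\ge 1$, any pair $u,v$ with $d_G(u,v)\le 2r$ is w.h.p.\ adjacent in $H$; hence $H$ is almost $\Delta$-regular with $\Delta=|B_{2r}(u)|-1=\Theta(r^2)=\Theta(n^{2\beta})$ and $e(H)=\Theta(nr^2)$. I would then prove the counterpart of Lemma~\ref{lem:Gp}: for each $u$ and each requested file $W_j$ the number of eligible servers satisfies $F_j(u)\sim\mathrm{Bin}(\Theta(r^2),p)$ and concentrates on its mean $\Theta(r^2)$. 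This concentration step, via a Chernoff bound together with a union bound over the $n\cdot K$ pairs $(u,j)$, is exactly where the hypothesis $\beta=\Omega(\log\log n/\log n)$ is consumed: it forces $\Ex{F_j(u)}=\Theta(n^{2\beta})$ to be at least a sufficiently large power of $\log n$, so that the deviation bound $\exp(-\Omega(r^2))$ survives the union bound, i.e.\ every ball contains $\Theta(r^2)$ holders of each file.

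Next I would re-run the edge-sampling computation~\eqref{eq:CondPr_(u,v)_picked}. The delicate point is that, unlike the goodness regime where $t(u,v)<\mu=O(1)$, here $t(u,v)=\Theta(K)$, so the sum over $j\in T(u,v)$ has $\Theta(K)$ terms, and I must check that this does not inflate the selection probability beyond $O(1/e(H))$. It does not, because each of the $\Theta(K)$ shared files is requested with probability only $1/K$; with $|S_{u,v}|=O(r^2)$ and $F_j(w)=\Theta(r^2)$ one obtains
\[
\Pr{(u,v)\text{ picked}}=\Theta\!\left(K\cdot\tfrac1K\cdot r^2\cdot\tfrac1n\cdot r^{-4}\right)=\Theta\!\left(\tfrac{1}{nr^2}\right)=O\!\left(\tfrac{1}{e(H)}\right),
\]
so the per-file request factor and the number of shared files cancel. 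With $H$ almost $\Delta$-regular and the $O(1/e(H))$ edge law in hand, Theorem~\ref{thm:related} yields maximum load $\Theta(\log\log n)+O\!\left(\log n/\log(\Delta/\log^4 n)\right)+O(1)$; since $\Delta=\Theta(n^{2\beta})$ the correction term is absorbed, giving $L=\Theta(\log\log n)$.

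The communication cost is then immediate from the definition of Strategy~II: every request is served by one of two servers drawn uniformly from $B_r(u)$, and since a uniformly random node of a radius-$r$ ball in the $2$D torus lies at distance $\Theta(r)$ (both in expectation and for a constant fraction w.h.p.), the served node sits at distance $\Theta(r)=\Theta(n^{\beta})$, whence $C=\Theta(n^{\beta})$. The main obstacle I anticipate is twofold: verifying the cancellation in the edge-sampling step so that the $\Theta(K)$ common files do not break the $O(1/e(H))$ bound, and pinning down how large $\beta$ must be for the Kenthapadi--Panigrahi correction term $O\!\left(\log n/\log(\Delta/\log^4 n)\right)$ to remain within $\Theta(\log\log n)$ — for $\beta$ bounded below by a positive constant this is clearly $O(1)$, and the boundary case $\beta=\Theta(\log\log n/\log n)$ is the regime demanding the most care.
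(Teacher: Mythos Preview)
Your proposal is correct and follows the same high-level route as the paper: reduce Strategy~II to an edge-sampling process on a configuration graph $H$ and invoke Theorem~\ref{thm:related}. The difference is one of interpretation and hence of effort. The paper (consistently with Example~\ref{ex:PO2C_M=K_r=infty}) treats $M=K$ as meaning that every node stores the entire library. Under that reading the configuration graph is \emph{deterministically} the $2r$-power of the torus, which is exactly $\Delta$-regular with $\Delta=\Theta(r^2)$, and Strategy~II is simply ``pick two uniform nodes of $B_r(u)$'', which samples edges of $H$ with probability $\Theta(1/e(H))$ by symmetry. No goodness lemma, no Chernoff bounds, no cancellation check are needed; Theorem~\ref{thm:related} applies directly.

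You instead read $M=K$ through the random-with-replacement placement of \S\ref{sec:Notation_ProblemSetting}, so each file is present at a node only with probability $p=1-(1-1/K)^K\approx 1-1/\mathrm{e}$. That forces you to redo the machinery of Lemmas~\ref{lem:GoodnessProperty}--\ref{lem:Gp}: concentrate $t(u,v)$ and $F_j(u)$, observe that the goodness bound $t(u,v)<\mu$ fails (now $t(u,v)=\Theta(K)$), and verify the $\Theta(K)$-vs-$1/K$ cancellation in the edge-sampling computation. All of this is sound, and it is a genuinely more robust argument---it would survive if one insisted on the random placement model. But relative to the paper's intended reading it is overkill: the paper's proof is three lines because the memory constraint is vacuous when $M=K$, leaving only the proximity constraint, which is handled by the regularity of the torus power graph.
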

 
 \begin{proof}
  Let us choose $r=n^{\beta}$, for some $\beta=\Omega(\log\log n/\log n)$. 
By the assumption $M=K$, the configuration graph $H$ (corresponding to $r$) is a graph in which two nodes $u$ and $v$ are connected if and only if $d(u,v)\le 2r$. Since our network is symmetric, for every $u$, $|B_r(u)|=\Theta(r^2)$ and hence $H$ is a  regular graph with $\Delta=\Theta(r^2)$.  
Also it is not hard to see that  Strategy II is equivalent to choosing an edge uniformly from $H$. Applying Theorem \ref{thm:related} (\cite{KP06}) to $H$ results in the maximum load of $\Theta(\log\log(n))$. In addition, choosing two random nodes in $|B_r(u)|=\Theta(r^2)$ results in communication cost of $C=\Theta(r)=\Theta\left(n^{\beta}\right)$. 

 \end{proof}
 
 The main point of Theorem \ref{thm:full} is that we should just endure $C=\Theta\left(n^{\beta}\right)$, for $\beta=\Omega(\log\log n/\log n)$, to benefit from the luxury of power of two choices, which is a very encouraging result.

\section{Simulations}\label{sec:Simulations}
In this section, we demonstrate the simulation results for two strategies introduced in the previous sections, namely, \emph{nearest replica} and \emph{proximity aware two choices} strategies. Our simulations are implemented in Python where the code is available online at \cite{OurSimulationCode_Github}. 

Figure~\ref{fig:OneChoice_SrvSzVar_MaxLoad} shows the maximum load of Strategy~I as a function of the number of servers where different curves correspond to different cache sizes. The network graph is a torus, where $100$ files with Uniform popularity are placed uniformly at random in each node. Each point is an average of $10000$ simulation runs.
This figure is in agreement with the logarithmic growth of the maximum load, asymptotically proved in \S\ref{sec:Nearest_Replica_Strategy}, even for the intermediate values of $n\approx 100,\ldots, 3000,$ which makes the results of \S\ref{sec:Nearest_Replica_Strategy} more general. 
Comparing different curves reveals the fact that in larger cache size setting, we have a more balanced network. That happens because enlarging cache sizes results in a more uniform Voronoi tessellation, \ie, having cells with smaller variation in size.

Furthermore, Figure~\ref{fig:OneChoice_CacheSzVar_AvgCost} shows the communication cost of Strategy~I as a function of cache size where different curves correspond to different library sizes. Here, the network graph is a torus of size $2025$ and each point is an average of $10000$ simulation runs. This figure is in agreement with the result of Theorem~\ref{thm:Strategy_I_CommCost}.

\begin{figure}
\begin{center}
\includegraphics[width=0.48\textwidth]{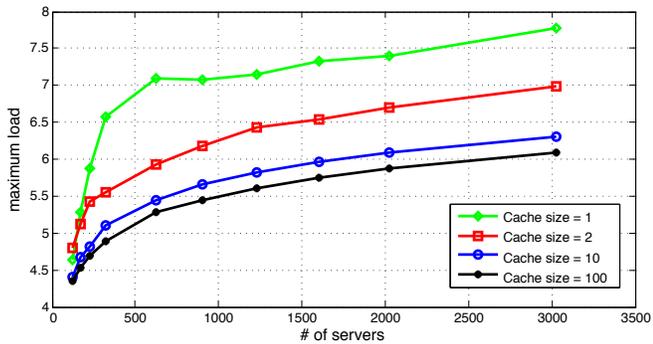}
\end{center}
\caption{The maximum load versus number of servers for Strategy~I. Here, the network topology is a torus, the file popularity is Uniform, and we have $K=100$.} 
\label{fig:OneChoice_SrvSzVar_MaxLoad}
\end{figure}

\begin{figure}
\begin{center}
\includegraphics[width=0.48\textwidth]{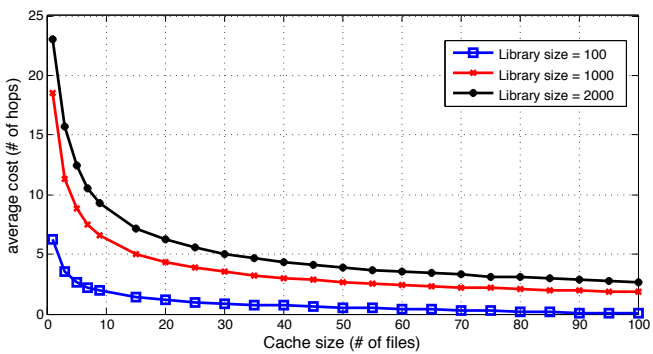}
\end{center}
\caption{The communication cost versus cache size for Strategy~I. Here, the network topology is a torus of size $2025$ and the file popularity is Uniform.}
\label{fig:OneChoice_CacheSzVar_AvgCost}
\end{figure}

\begin{figure}
\begin{center}
\includegraphics[width=0.48\textwidth]{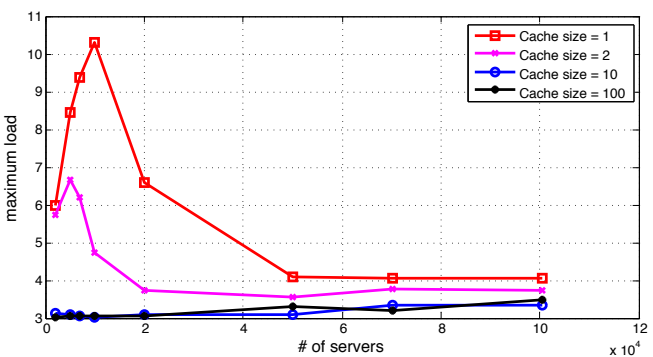}
\end{center}
\caption{The maximum load versus number of servers for Strategy~II.
Here, the network topology is torus, the file popularity is Uniform, and the library size is $K=2000$. Moreover, we assume $r=\infty$.}
\label{fig:TwoChoice_SrvSzVar_MaxLoad}
\end{figure}

In order to simulate Strategy~II, first we set $r=\infty$ to study the effect of cache size on the maximum load and communication cost, and then consider the effect of limited $r$ on the performance of the system. Figure~\ref{fig:TwoChoice_SrvSzVar_MaxLoad} shows the maximum load of the network versus number of servers where each curve demonstrates a different cache size. The network graph is a torus, where $2000$ files with Uniform popularity are placed uniformly at random in each node. Each point is an average of $800$ simulation runs.
In each curve, since cache size and number of files are fixed, increasing the number of servers translates to increasing each file replication. 
Figure~\ref{fig:TwoChoice_SrvSzVar_MaxLoad} demonstrates the system performance for large system sizes, \ie, $n\approx 10^4,\ldots,10^5$. However, to get a better understanding of network behavior, let us compare the load balancing performance of Strategies I and II in Figure~\ref{fig:Lattice-SrvSzVar-1ch_vs_2ch} where the file library size is $K=100$ and $n\approx 10^3$.

In Figure~\ref{fig:Lattice-SrvSzVar-1ch_vs_2ch}, when the file replication is low, due to high correlation between the two choices of Strategy~II, power of two choices is not expected. This is reflected in Figure~\ref{fig:Lattice-SrvSzVar-1ch_vs_2ch}; for example in the curve corresponding to $M=1$ for $n \le 400$ we have a fast growth in maximum load which mimics the load balancing performance of Strategy~I. We see the same trend for the curve corresponding to $M=2$ for $n\le 200$.
However, assuming $M\ge 2$, for $n>1000$, since there is enough file replication in the network, the load balancing performance is greatly improved due to the power of two choices. This is in accordance with the lessons learned from \S\ref{sec:Proximity-Aware_Two_Choices_Strategy}. Also, in between, we have a transition region where a mixed behavior is observed. 
Observations made above from Figure~\ref{fig:Lattice-SrvSzVar-1ch_vs_2ch} have an important practical implication. Since employing Strategy~II is only beneficial in networks with high file replication, for other situations with limited cache size, the less sophisticated Strategy~I is a more proper choice.

\begin{figure}
\begin{center}
\includegraphics[width=0.48\textwidth]{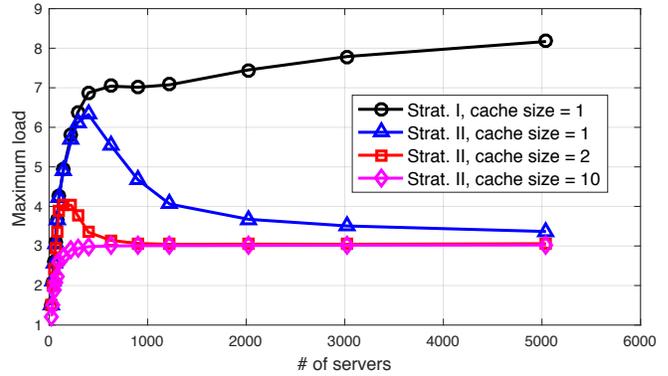}
\end{center}
\caption{The maximum load versus number of servers for Strategies~I and II. 
Here, the network topology is torus, the file popularity is Uniform, and the library size is $K=100$. Moreover, we assume $r=\infty$.}
\label{fig:Lattice-SrvSzVar-1ch_vs_2ch}
\end{figure}

Figure~\ref{fig:TwoChoice_SrvSzVar_AvgCost} draws the communication cost versus number of servers for various cache sizes for similar setting used in Figure~\ref{fig:TwoChoice_SrvSzVar_MaxLoad}. Since in this figure there is no constraint on the proximity, the communication cost growth is of order $\Theta(\sqrt{n})$.

\begin{figure}
\begin{center}
\includegraphics[width=0.48\textwidth]{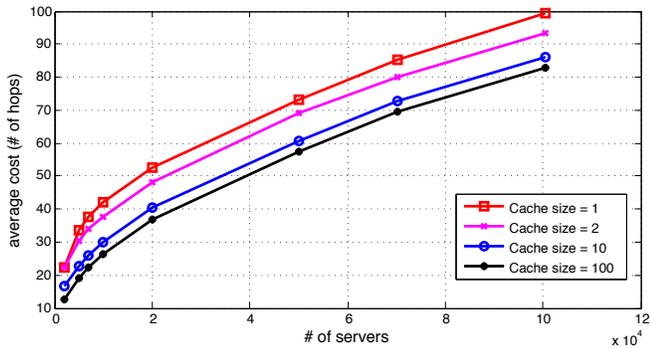}
\end{center}
\caption{The communication cost versus number of servers for Strategy~II. Here, the settings are similar to that of Figure~\ref{fig:TwoChoice_SrvSzVar_MaxLoad}.}
\label{fig:TwoChoice_SrvSzVar_AvgCost}
\end{figure}

In simulations presented so far, we only considered the case $r=\infty$. In order to investigate the effect of parameter $r$ on the performance of the system, in Figure~\ref{fig:Tradeoff_srvn=2025_fn=500_itr=5000}, we have simulated network operation for different values of $r$. This results in a trade-off between the maximum load and communication cost, as shown in Figure~\ref{fig:Tradeoff_srvn=2025_fn=500_itr=5000}.  Here we consider a torus with $2025$ servers, where $500$ files with Uniform popularity are placed uniformly at random in each node. Each point is an average of $5000$ simulation runs.

In this figure, like before (\ie, Figure~\ref{fig:Lattice-SrvSzVar-1ch_vs_2ch}), we observe two performance regimes based on the file replication in the network. In high memory regime, \eg, for curves corresponding to $M=50$ and $M=200$, we can achieve the power of two choices by sacrificing a negligible communication cost. On the other hand, in low memory regime, \ie, $M=1$, we cannot decrease the maximum load even at the expense of high communication cost values. For intermediate values of $M$, we clearly observe the trade-off between the maximum load and communication cost.

\begin{figure}
\begin{center}
\includegraphics[width=0.48\textwidth]{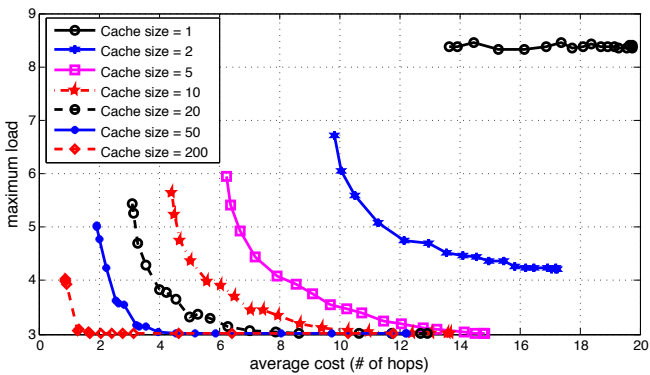}
\end{center}
\caption{The trade-off between the maximum load and communication cost for Strategy~II.
Here, the network topology is a torus of size $2025$, the file popularity is Uniform, and the library size is $K=500$.}
\label{fig:Tradeoff_srvn=2025_fn=500_itr=5000}
\end{figure}

All above simulations investigated the performance for networks with torus topology and Uniform file popularity distribution, being in agreement with our theoretical results' indications. However, one may ask how sensitive are our findings to the network topology and file popularity choices.
Thus, in the following, we examine network performance for Zipf file popularity profile and more practical network topologies, namely, random geometric graph (RGG) and power law random graph model \cite{albert2002statistical}.

Figures~\ref{fig:SrvSzVar_Lattice_Zipf_fn=64_cs=2_itr=1000},
\ref{fig:SrvSzVar_RGG_Zipf_1ch_vs_2ch_fn=64_cs=2_itr=1000}, and
\ref{fig:SrvSzVar_BarabasiAlbert_Zipf_1ch_vs_2ch_fn=64_cs=2_itr=1000}, 
show the maximum load versus the number of servers for different network topologies (namely, torus, RGG, and power law model) and Zipf distribution with parameter\footnote{Note that the Zipf distribution with $\gamma=0$ corresponds to the Uniform distribution.} $\gamma\in\{0,1,1.5\}$.

\begin{figure}
\begin{center}
\includegraphics[width=0.48\textwidth]{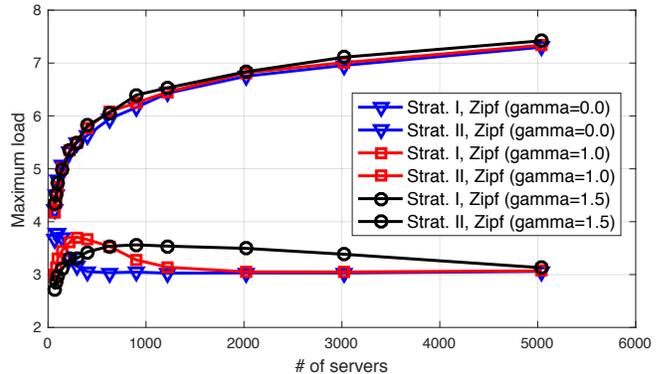}
\end{center}
\caption{The maximum load versus number of servers for Strategies~I and II. 
Here, the network topology is torus, the file popularity is Zipf, the library size is $K=64$, and the cache size is $M=2$. Moreover, we assume $r=\infty$. Each point is an average over $1000$ independent simulation runs.}
\label{fig:SrvSzVar_Lattice_Zipf_fn=64_cs=2_itr=1000}
\end{figure}

\begin{figure}
\begin{center}
\includegraphics[width=0.48\textwidth]{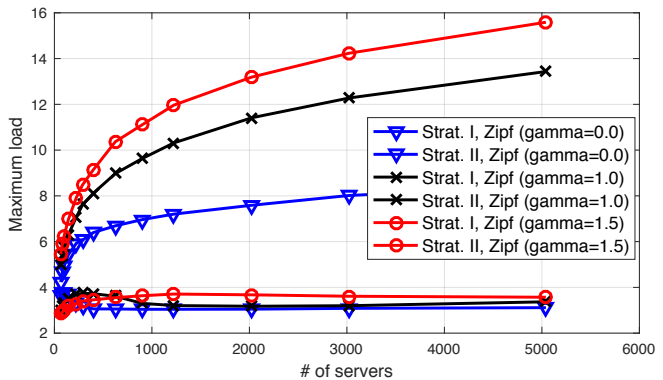}
\end{center}
\caption{The maximum load versus number of servers for Strategies~I and II. 
Here, the network topology is RGG. The remaining settings are similar to that of Figure~\ref{fig:SrvSzVar_Lattice_Zipf_fn=64_cs=2_itr=1000}.}
\label{fig:SrvSzVar_RGG_Zipf_1ch_vs_2ch_fn=64_cs=2_itr=1000}
\end{figure}

\begin{figure}
\begin{center}
\includegraphics[width=0.48\textwidth]{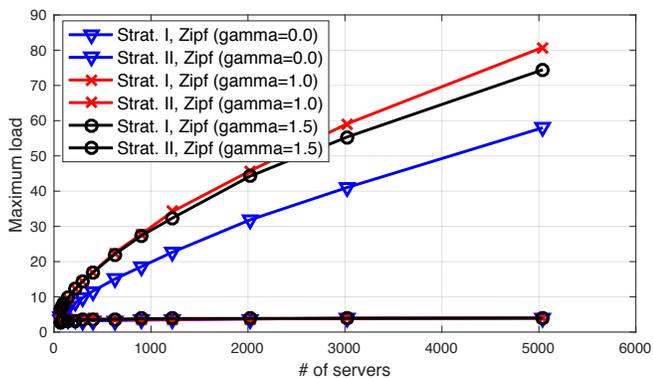}
\end{center}
\caption{The maximum load versus number of servers for Strategies~I and II. 
Here, the network topology is power law random graph. The remaining settings are similar to that of Figure~\ref{fig:SrvSzVar_Lattice_Zipf_fn=64_cs=2_itr=1000}.}
\label{fig:SrvSzVar_BarabasiAlbert_Zipf_1ch_vs_2ch_fn=64_cs=2_itr=1000}
\end{figure}

Also Figures~\ref{fig:Tradeoff_Lattice_Zipf_srvn=2025_fn=500_cs=10_itr=5000},  \ref{fig:Tradeoff_RGG_Zipf_srvn=2025_fn=500_cs=10_itr=5000}, and \ref{fig:Tradeoff_BarabasiAlbert_Zipf_srvn=2025_fn=500_cs=10_itr=5000},
demonstrate the performance trade-off between maximum load and communication cost 
for different network topologies (namely, torus, random RGG, and power law model) and  Zipf distribution with parameter $\gamma\in\{0,1,1.5, 2\}$.
All these simulations show that the trends and the trade-off we found in our theoretical results are also valid for more practical network settings.

For convenience, a summary of simulation parameters are stated in Table~\ref{tab:Sum_Sim_for_Figures}.

\begin{figure}
\begin{center}
\includegraphics[width=0.48\textwidth]{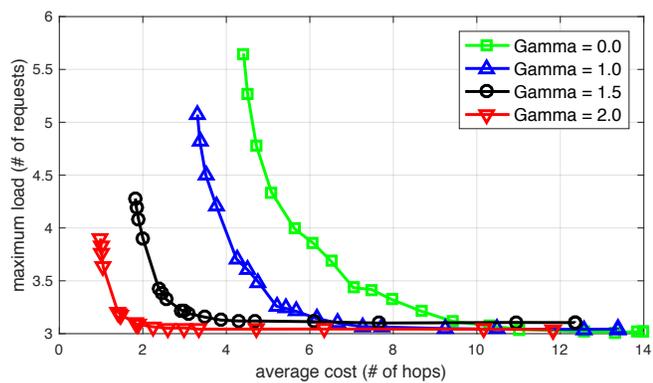}
\end{center}
\caption{The trade-off between the maximum load and communication cost for Strategy~II.
Here, the network topology is a torus of size $2025$, the file popularity is Zipf, the library size is $K=500$, and the cache size is $M=10$. Each point is an average over $5000$ independent simulation runs.}
\label{fig:Tradeoff_Lattice_Zipf_srvn=2025_fn=500_cs=10_itr=5000}
\end{figure}

\begin{figure}
\begin{center}
\includegraphics[width=0.48\textwidth]{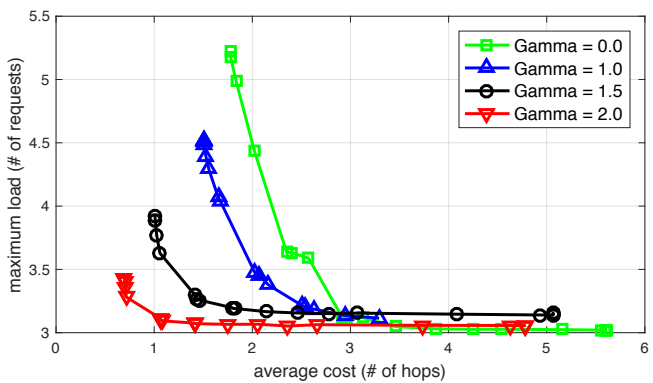}
\end{center}
\caption{The trade-off between the maximum load and communication cost for Strategy~II.
Here, the network topology is a RGG of size $2025$. The remaining settings are similar to that of Figure~\ref{fig:Tradeoff_Lattice_Zipf_srvn=2025_fn=500_cs=10_itr=5000}.}
\label{fig:Tradeoff_RGG_Zipf_srvn=2025_fn=500_cs=10_itr=5000}
\end{figure}

\begin{figure}
\begin{center}
\includegraphics[width=0.48\textwidth]{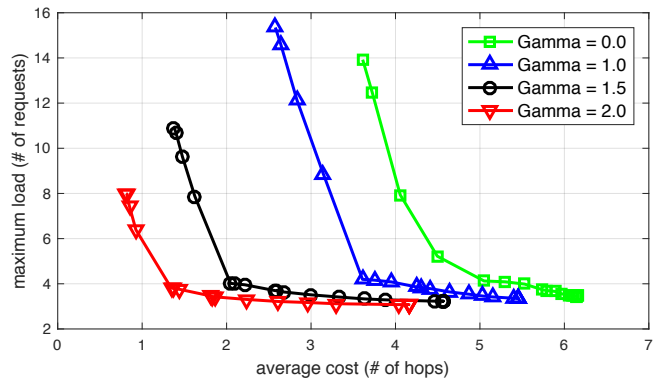}
\end{center}
\caption{The trade-off between the maximum load and communication cost for Strategy~II.
Here, the network topology is a power law random graph of size $2025$. The remaining settings are similar to that of Figure~\ref{fig:Tradeoff_Lattice_Zipf_srvn=2025_fn=500_cs=10_itr=5000}.}
\label{fig:Tradeoff_BarabasiAlbert_Zipf_srvn=2025_fn=500_cs=10_itr=5000}
\end{figure}

\renewcommand{\arraystretch}{1.2}
\begin{table}
\begin{center}
\begin{tabular}{|c|c|c|c|}
\hline
Fig. Number & Net. Topology & Strategy & Popularity\\
\hline
\ref{fig:OneChoice_SrvSzVar_MaxLoad}, \ref{fig:OneChoice_CacheSzVar_AvgCost} & Torus & Strategy I & Uniform\\
\hline
\ref{fig:TwoChoice_SrvSzVar_MaxLoad}, \ref{fig:TwoChoice_SrvSzVar_AvgCost} & Torus & Strategy II ($r=\infty$) & Uniform\\
\hline
\ref{fig:Lattice-SrvSzVar-1ch_vs_2ch} & Torus & Strategy I, II ($r=\infty$) & Uniform\\
\hline
\ref{fig:Tradeoff_srvn=2025_fn=500_itr=5000} & Torus & Strategy II ($r$ varying) & Uniform\\
\hline
\ref{fig:SrvSzVar_Lattice_Zipf_fn=64_cs=2_itr=1000} & Torus & Strategy I, II ($r=\infty$) & Zipf\\
\hline
\ref{fig:SrvSzVar_RGG_Zipf_1ch_vs_2ch_fn=64_cs=2_itr=1000} & RGG & Strategy I, II ($r=\infty$) & Zipf\\
\hline
\ref{fig:SrvSzVar_BarabasiAlbert_Zipf_1ch_vs_2ch_fn=64_cs=2_itr=1000} & Power Law & Strategy I, II ($r=\infty$) & Zipf\\
\hline
\ref{fig:Tradeoff_Lattice_Zipf_srvn=2025_fn=500_cs=10_itr=5000} & Torus & Strategy II ($r$ varying) & Zipf\\
\hline
\ref{fig:Tradeoff_RGG_Zipf_srvn=2025_fn=500_cs=10_itr=5000} & RGG & Strategy II ($r$ varying) & Zipf\\
\hline
\ref{fig:Tradeoff_BarabasiAlbert_Zipf_srvn=2025_fn=500_cs=10_itr=5000} & Power Law & Strategy II ($r$ varying) & Zipf\\
\hline
\end{tabular}
\end{center}
\caption{Summary of the simulation parameters for each figure.}
\label{tab:Sum_Sim_for_Figures}
\end{table}

\section{Discussion and Concluding Remarks}
\label{sec:Discussion_Remarks}
In this section, we first discuss three important practical issues related to our proposed scheme, then we will conclude the paper.

Our theoretical results in Sections \ref{sec:Nearest_Replica_Strategy} and \ref{sec:Proximity-Aware_Two_Choices_Strategy} are stated for a 2D-Grid topology. The main reason for assuming this rather unrealistic topology is developing the main idea of the paper clearly. However, it should be noticed that our approach can be extended to more general graph models at the expense of lengthy proofs and calculations. For example, as mentioned in Section \ref{sec:Proximity-Aware_Two_Choices_Strategy}, $|B_r(u)|$ is the size of the ball of radius $r$ around node $u$. The main feature of the 2D-Grid which affects our results is that $|B_r(u)|=\Theta(r^2)$ for all $u$. Now suppose that, instead of assuming a 2D-Grid, we consider a graph in which $|B_r(u)|=\Theta(r^{\mathrm{dim}})$, w.h.p. Then the parameter $\mathrm{dim}$ will appear in our results instead of $\mathrm{dim}=2$ in the special case of 2D-Grid. For example, the term $\alpha+2\beta$ in  the statement of Theorem 4 would be generalized to $\alpha+\mathrm{dim}\times\beta$. More generally, even for $|B_r(u)|=\Theta(f(r))$, our basic technicalities can be extended too. 
However, in this paper we investigate other network topologies, such as Random Geometric Graphs and Scale-Free (power law) networks via extensive simulations in Section \ref{sec:Simulations}. As we have discussed it there, the main trends are also valid for these topologies as well.
Thus, our findings cover a wider class of graph models which are more similar to real-world CDN network topologies.

The proposed \emph{proximity-aware two choices} scheme can be implemented in a distributed manner. To see why, notice that upon arrival of each request at each server, this strategy needs two kinds of information to redirect the request. This information can be provided to the requesting server without the need for a centralized authority in the following way. The first one is the cache content of other users in its neighborhood with radius $r$. Since, the cache content dynamic of servers is much slower than the requests arrival, this can be done by periodic polling of nearby servers without introducing much overhead (also see Distributed Hash Table (DHT) schemes, \eg, \cite{BauerHW07} and \cite{KargerLLPLL97}). The second type of information is the queue length information of two randomly chosen nodes inside its neighborhood with radius $r$, which can also be efficiently done in a distributed manner by polling or piggybacking.

In practice, request arrivals and servers' operation happen in continuous time which needs a queuing theory based performance analysis. However, as shown in \cite{Mitzenmacher01} and \cite{sur01}, the behaviour of load balancing schemes in continuous time (\ie, known as the supermarket model) and static balls and bins problems are closely related. Thus, we conjecture that our proposed scheme will also have the same performance in the queuing theory based model. We postpone a rigorous analysis of such scenario to future work.

In summary, in this work, we have considered the problem of randomized load balancing and its tension with communication cost and memory resources in cache networks. By proposing two request assignment schemes, this trade-off has been investigated analytically. Moreover, extensive simulation results support our theoretical findings and provide practical design guidelines.

\section*{Acknowledgment}
The authors would like to thank Seyed~Abolfazl~Motahari, Omid Etesami, Thomas~Sauerwald and Farzad~Parvaresh for helpful discussions and feedback.

\bibliographystyle{IEEEtran}
\bibliography{diss1}

\appendices

\section{Some Tail Bounds}\label{app:bounds}

\begin{thm}[Chernoff Bounds]\label{app:cher}
	Suppose that $X_1, X_2,\ldots, X_n\in \{0, 1\}$ are independent random variables and let $X=\sum_{i=1}^n X_i$. 
	Then for every $\delta\in (0, 1)$ the following inequalities hold:
	\begin{align*}
	\Pr{X\ge (1+\delta)\Ex{X}} &\le \exp(-\delta^2 \Ex{X}/2),\\
	\Pr{X\le (1-\delta)\Ex{X}} &\le \exp(-\delta^2 \Ex{X}/3).
	\end{align*}
	In particular,
	\[
	\Pr{|X-\Ex{X}|\ge \delta\Ex{X}}\le 2\exp(-\delta^2 \Ex{X}/3).
	\]
\end{thm}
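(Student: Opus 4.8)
The plan is to use the standard exponential-moment (Chernoff--Bernstein) method, handling the two tails separately and then assembling the two-sided ``in particular'' bound by a union bound. Write $p_i=\Pr{X_i=1}$ and $\mu=\Ex{X}=\sum_{i=1}^n p_i$; note that the $X_i$ need not be identically distributed, so I keep the $p_i$ general throughout. The one structural input from independence is that the moment generating function factorizes, $\Ex{\mathrm{e}^{tX}}=\prod_{i=1}^n\Ex{\mathrm{e}^{tX_i}}$, which is what lets the per-coordinate bounds multiply.

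For the upper tail, fix $t>0$ and apply Markov's inequality to the nonnegative variable $\mathrm{e}^{tX}$:
\[
\Pr{X\ge (1+\delta)\mu}=\Pr{\mathrm{e}^{tX}\ge \mathrm{e}^{t(1+\delta)\mu}}\le \mathrm{e}^{-t(1+\delta)\mu}\,\Ex{\mathrm{e}^{tX}}.
\]
First I would bound each factor via $\Ex{\mathrm{e}^{tX_i}}=1+p_i(\mathrm{e}^t-1)\le \exp\!\big(p_i(\mathrm{e}^t-1)\big)$, using $1+x\le \mathrm{e}^x$, so that $\Ex{\mathrm{e}^{tX}}\le \exp\!\big(\mu(\mathrm{e}^t-1)\big)$. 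Substituting and then choosing the free parameter $t=\ln(1+\delta)$ to minimize the exponent yields the sharp ``Chernoff form''
\[
\Pr{X\ge (1+\delta)\mu}\le \left(\frac{\mathrm{e}^{\delta}}{(1+\delta)^{1+\delta}}\right)^{\mu}.
\]
The lower tail is the mirror image: apply Markov to $\mathrm{e}^{-sX}$ for $s>0$, use $\Ex{\mathrm{e}^{-sX}}\le \exp\!\big(\mu(\mathrm{e}^{-s}-1)\big)$, and optimize at $s=-\ln(1-\delta)$ to obtain $\Pr{X\le (1-\delta)\mu}\le \big(\mathrm{e}^{-\delta}/(1-\delta)^{1-\delta}\big)^{\mu}$.

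The remaining step, and the only place where the constraint $\delta\in(0,1)$ and the explicit constants enter, is to replace these sharp but unwieldy expressions by the clean exponential bounds in the statement. This reduces to verifying the two scalar inequalities $\delta-(1+\delta)\ln(1+\delta)\le -\delta^2/c$ and $-\delta-(1-\delta)\ln(1-\delta)\le -\delta^2/2$ on $(0,1)$ for the constant $c$ claimed in the theorem; each is handled by setting the difference of the two sides equal to a function $f$ with $f(0)=0$ and signing $f'$ through the elementary estimates $\ln(1+\delta)\ge \delta-\delta^2/2$ and $\ln(1-\delta)\le -\delta$. Finally, the two-sided bound follows by writing $\{|X-\mu|\ge \delta\mu\}$ as the union of the two one-sided events, adding their probabilities, and keeping the weaker (larger) of the two resulting exponents, which gives the stated factor of $2$. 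I expect this concluding calculus lemma to be the only delicate point: the exponential-moment machinery is entirely routine, whereas verifying the scalar inequalities on the \emph{whole} interval $(0,1)$ rather than merely for small $\delta$ is exactly where the stated constants have to be pinned down.
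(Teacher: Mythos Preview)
The paper does not actually prove this theorem: immediately after the statement it simply writes ``For a proof see \cite{DP09}'' and moves on. So there is no in-paper argument to compare against; your exponential-moment approach is exactly the standard textbook derivation that the cited reference would give, and it is the right thing to do.

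One caveat worth flagging, since you correctly identified the scalar calculus step as the delicate point: with the constants \emph{as printed in the paper}, the upper-tail inequality $\delta-(1+\delta)\ln(1+\delta)\le -\delta^2/2$ is in fact \emph{false} on all of $(0,1)$ (set $g(\delta)=\delta-(1+\delta)\ln(1+\delta)+\delta^2/2$; then $g(0)=0$ and $g'(\delta)=\delta-\ln(1+\delta)>0$). The usual formulation has the constants the other way around, namely $\exp(-\delta^2\mu/3)$ for the upper tail and $\exp(-\delta^2\mu/2)$ for the lower tail, and your sketch goes through cleanly for those. Since the paper only cites \cite{DP09} and the ``in particular'' two-sided bound with constant $1/3$ is what is actually used elsewhere in the paper, this is almost certainly a transcription slip in the statement rather than a gap in your plan.
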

For a proof see \cite{DP09}.

To deal with moderate independency we can state the following lemma. 
\begin{lemma}[Deviation bounds for moderate independency, see  {\cite[Lemma 1.18]{doerrbook}}]
\label{mod-chernoff}
	Let $X_1,\ldots, X_n$ be arbitrary binary random variables. Let
	$X_1^*, X_2^*, \ldots, X_n^*$
	be binary random variables that are mutually independent and
	such that for all $i$, $X_i$
	is independent of $X_1, \ldots, X_{i-1}$. Assume that for all
	$i$ and all $x_1, \ldots, x_{i-1} \in \{0, 1\}$,
	\[
	\Pr{X_i = 1|X_1= x_1,\ldots, X_{i-1}= x_{i-1}}\ge \Pr{X^*_i= 1}.
	\]
	Then for all $k \ge 0$, we have
	\[
	\Pr{ \sum_{i=1}^nX_i\le k}\le \Pr{\sum_{i=1}^n X^*_i\le k}
	\]
	and the latter term can be bounded by any deviation bound for independent
	random variables.
\end{lemma}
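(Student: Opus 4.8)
The plan is to establish the stochastic domination $\sum_i X_i \succeq \sum_i X_i^*$ through an explicit \emph{monotone coupling} on a shared source of randomness; once that is in hand, both the stated inequality and the reduction to a deviation bound for independent variables follow immediately. First I would encode the dependent sequence through its conditional success probabilities: for each $i$ and each history $x_1,\dots,x_{i-1}\in\{0,1\}$ set $q_i(x_1,\dots,x_{i-1}):=\Pr{X_i=1\mid X_1=x_1,\dots,X_{i-1}=x_{i-1}}$, and write $p_i^*:=\Pr{X_i^*=1}$. By the chain rule these conditionals determine the entire joint law of $(X_1,\dots,X_n)$, so any sequence reconstructed from them carries exactly the prescribed distribution.

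Next I would realize both sequences on one probability space carrying independent uniform random variables $U_1,\dots,U_n$ on $[0,1]$. Define the independent comparison sequence by thresholding, $X_i^*:=\1\{U_i\le p_i^*\}$; since the $U_i$ are i.i.d.\ uniform, the $X_i^*$ are mutually independent with the correct marginals. Define the dependent sequence recursively by $X_i:=\1\{U_i\le q_i(X_1,\dots,X_{i-1})\}$. The verification step is to check that this recursion reproduces the original law of $(X_1,\dots,X_n)$: because $U_i$ is independent of $U_1,\dots,U_{i-1}$, and hence of the already-constructed $X_1,\dots,X_{i-1}$, conditioning on any history leaves $U_i$ uniform, so $\Pr{X_i=1\mid X_1=x_1,\dots,X_{i-1}=x_{i-1}}=q_i(x_1,\dots,x_{i-1})$, which is precisely the required conditional structure.

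The domination is then pointwise, and the hypothesis does all the work: it asserts exactly that $q_i(x_1,\dots,x_{i-1})\ge p_i^*$ for every history, so the acceptance threshold used for $X_i$ never lies below that used for $X_i^*$. Hence $\{U_i\le p_i^*\}\subseteq\{U_i\le q_i(X_1,\dots,X_{i-1})\}$, giving $X_i\ge X_i^*$ for every $i$ and every outcome, and therefore $\sum_{i=1}^n X_i\ge\sum_{i=1}^n X_i^*$ almost surely. Consequently $\{\sum_i X_i\le k\}\subseteq\{\sum_i X_i^*\le k\}$ on the coupled space, and taking probabilities yields the claimed inequality; since $\sum_i X_i^*$ is a sum of independent indicators, Theorem~\ref{app:cher} then bounds the right-hand side. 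The only real obstacle here is the coupling bookkeeping---confirming that the sequentially thresholded $X_i$ genuinely carry the original joint distribution, which is exactly where independence of $U_i$ from the past is used---rather than any nontrivial estimate; I would also read the comparison hypothesis as a lower bound on the \emph{conditional} success probabilities of the $X_i$ by the \emph{unconditional} ones of the independent sequence $X_i^*$.
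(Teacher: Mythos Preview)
Your coupling argument is correct and is the standard proof of this stochastic-domination lemma. Note, however, that the paper does not actually prove this statement: it is recorded in Appendix~\ref{app:bounds} as a quotation from \cite[Lemma~1.18]{doerrbook} with no accompanying argument, so there is no ``paper's own proof'' to compare against. Your construction---generating both sequences by thresholding a common i.i.d.\ uniform source and reading off $X_i\ge X_i^*$ pointwise from the hypothesis $q_i(\cdot)\ge p_i^*$---is exactly the argument one finds in the cited reference, and the bookkeeping you flag (that the recursively defined $X_i$ inherit the prescribed conditionals because $U_i$ is independent of the past) is the only point requiring care. One small remark: the lemma as printed in the paper contains what appears to be a typo (``$X_i$ is independent of $X_1,\ldots,X_{i-1}$'' should read ``$X_i^*$ is independent of $X_1,\ldots,X_{i-1}$''); your proof sidesteps this entirely by rebuilding both sequences on a fresh space, so the issue does not affect your argument.
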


\end{document}